\def\icdcn{0} 
\def\icdcnintro{1}
\def\nopfs{0} 
\def\esa{0} 
\def\pagenr{0} 
\algrenewcommand\algorithmicrequire{\textbf{\quad Input:}}
\algrenewcommand\algorithmicensure{\textbf{\quad Output:}}
\newenvironment{proof sketch}[1]{\noindent {\emph{Proof sketch of #1:}}}{\hfill \qed}
\newtheorem{theorem}{Theorem}
\newtheorem{proposition}[theorem]{Proposition}
\newtheorem{lemma}[theorem]{Lemma}
\newtheorem{definition}{Definition}
\newcommand{\eqdf}{\triangleq}
\newcommand{\eps}{\varepsilon}
\newcommand{\poly}{{\rm poly}}
\newcommand{\polylog}{{\rm polylog}}
\newcommand{\sol}{\textit{sol}}
\newcommand{\prev}{\textit{prev}}
\newcommand{\alg}{\textsc{alg}}
\newcommand{\degree}{\text{\textit{deg}}}
\newcommand{\dist}{\text{\textit{dist}}}
\newcommand{\wmin}{w_{\min}}
\newcommand{\dlocal}{\textsc{DistLocal}}
\newcommand{\dlocall}{\text{DistLocal}}
\newcommand{\clocal}{\textsc{CentLocal}}
\newcommand{\clocall}{\text{CentLocal}}
\newcommand{\orad}{\text{\textsc{o-rad}}}
\newcommand{\oradd}{\text{O-RAD}}
\newcommand{\ao}{\text{\textsc{ao}}}
\newcommand{\rad}{\text{\textit{rad}}}
\newcommand{\mis}{\text{\textsc{mis}}}
\newcommand{\miss}{\text{\text{MIS}}}
\newcommand{\lmis}{\text{\textsc{l-mis}}}
\newcommand{\mm}{\text{\textsc{mm}}}
\newcommand{\mcm}{\text{\textsc{mcm}}}
\newcommand{\mcmm}{\text{\text{MCM}}}
\newcommand{\mum}{\text{\textsc{mcm}}}
\newcommand{\mwm}{\text{\textsc{mwm}}}
\newcommand{\mwmm}{\text{\text{MWM}}}
\newcommand{\pcolor}{\text{\poly$(\Delta)$-\textsc{Color}}}
\newcommand{\dcolor}{\text{$(\Delta+1)$\textsc{-Color}}}
\newcommand{\NN}{{\mathbb{N}}}
\newcommand{\oracle}{{\mathcal O}}
\newcommand{\proca}{{\mathcal A}}
\def\mnFONT{\footnotesize}
\newcommand{\dnote}[1]{\marginpar{\mnFONT\begin{minipage}[t]{0.75in}
                      \raggedright\hrule
                       {\sf D:} {\sl{#1}}\end{minipage}}}
\newcommand{\mnote}[1]{\marginpar{\mnFONT\begin{minipage}[t]{0.75in}
                      \raggedright\hrule
                       {\sf M:} {\sl{#1}}\end{minipage}}}
\newcommand{\gnote}[1]{\marginpar{\mnFONT\begin{minipage}[t]{0.75in}
                      \raggedright\hrule
                       {\sf G:} {\sl{#1}}\end{minipage}}}
\newcommand{\dnote}[1]{}
\newcommand{\mnote}[1]{}
\newcommand{\gnote}[1]{}
\begin{document}
\title{
Distributed Maximum Matching in
Bounded Degree Graphs
}

\ifnum\icdcn=0
\author{ %
Guy Even\thanks{School of Electrical Engineering, Tel-Aviv
Univ., Tel-Aviv 69978, Israel.
\protect\url{{guy,medinamo,danar}@eng.tau.ac.il}.}
\and Moti Medina$^*$\thanks{M.M was partially funded
by the Israeli Ministry of Science and Technology.}
\and Dana Ron$^*$\thanks{Research supported by the Israel Science Foundation grant number 671/13.}
}
\else 
\numberofauthors{3}
\author{
\alignauthor Guy Even\\
       \affaddr{School of Electrical Engineering}\\
       \affaddr{Tel-Aviv University}\\
       \affaddr{Tel-Aviv 69978, Israel}\\
       \email{guy@eng.tau.ac.il}
\alignauthor Moti Medina\titlenote{M.M was partially funded
by the Israeli Ministry of Science and Technology.}\\
       \affaddr{School of Electrical Engineering}\\
       \affaddr{Tel-Aviv University}\\
       \affaddr{Tel-Aviv 69978, Israel}\\
       \email{medinamo@eng.tau.ac.il}
\alignauthor Dana Ron \titlenote{Research supported by the Israel Science Foundation grant number 671/13.}\\
       \affaddr{School of Electrical Engineering}\\
       \affaddr{Tel-Aviv University}\\
       \affaddr{Tel-Aviv 69978, Israel}\\
       \email{danar@eng.tau.ac.il}
}

\fi

\maketitle
\ifnum\pagenr=1
\thispagestyle{fancy}
\fi
\begin{abstract}
We present deterministic distributed algorithms for computing approximate maximum cardinality
matchings and approximate maximum weight matchings. Our algorithm for the unweighted case
computes a matching whose size is at least $(1-\eps)$ times the optimal in
$\Delta^{O(1/\eps)} + O\left(\frac{1}{\eps^2}\right) \cdot\log^*(n)$ rounds
where $n$ is the number of vertices in the graph and $\Delta$ is the maximum degree.
Our algorithm for the edge-weighted case
computes a matching whose weight is at least $(1-\eps)$ times the optimal in
$\log(\min\{1/\wmin,n/\eps\})^{O(1/\eps)}\cdot(\Delta^{O(1/\eps)}+\log^*(n))$ rounds
for edge-weights in $[\wmin,1]$.

The best previous algorithms for both the unweighted case
and the weighted case are by Lotker, Patt-Shamir, and
Pettie~(SPAA 2008).  For the unweighted case they give a
randomized $(1-\eps)$-approximation algorithm that runs
in $O((\log(n)) /\eps^3)$ rounds. For the weighted case
they give a randomized  $(1/2-\eps)$-approximation
algorithm that runs in $O(\log(\eps^{-1}) \cdot \log(n))$
rounds. Hence, our results improve on the previous ones
when the parameters $\Delta$, $\eps$ and $\wmin$ are
constants (where we reduce the number of runs from
$O(\log(n))$ to  $O(\log^*(n))$), and more generally when
$\Delta$, $1/\eps$ and $1/\wmin$ are sufficiently slowly
increasing functions of $n$. Moreover, our algorithms are
deterministic rather than randomized.
\end{abstract}

\ifnum\icdcn=1
\category{F.2}{Analysis of Algorithms and Problem
Complexity}{Miscellaneous}

\terms{Theory, Algorithms} \fi

\ifnum\icdcn=0
\paragraph{Keywords.}
Centralized Local Algorithms, Distributed Local Algorithms,
Maximum Matching, Maximum Weighted Matching, Graph
Algorithms. \else \keywords{Centralized Local Algorithms,
Distributed Local Algorithms, Maximum Matching, Maximum
Weighted Matching, Graph Algorithms. } \fi

\section{Introduction}\sloppy
\ifnum\icdcnintro=1 In this work we consider the problem of distributively computing
an approximate maximum (weighted) matching in bounded degree graphs.  Let $G=(V,E)$
denote an edge weighted graph with $n$ vertices and maximum degree $\Delta$.  Assume
that the maximum edge weight is $1$ and let $\wmin$ denote the minimum edge weight.
Denoting by $\mcm(G)$ the maximum cardinality of a matching in $G$ and by $\mwm(G)$
the maximum weight of a matching in $G$, we present the following results:
\begin{itemize}
\item A deterministic distributed algorithm that for any $\eps \in (0,1)$ computes a
  matching whose size is at least $(1-\eps)\cdot \mcm(G)$ in $$\Delta^{O(1/\eps)} +
  O\left(\frac{1}{\eps^2}\right) \cdot\log^*(n)$$ rounds.
\item \sloppy A deterministic distributed algorithm that and for any $\eps \in (0,1)$
   computes a matching whose weight is at least $(1-\eps)\cdot \mwm(G)$ in
   $$\log(\min\{1/\wmin,n/\eps\})^{O(1/\eps)}\cdot(\Delta^{O(1/\eps)}+\log^*(n))$$
  rounds.
\end{itemize}
The best previous algorithms for both the unweighted and
weighted cases are by Lotker, Patt-Shamir, and
Pettie~\cite{DBLP:conf/spaa/LotkerPP08}. For the unweighted
case they give a randomized $(1-\eps)$-approximation
algorithm that runs in $O((\log(n))/\eps^{3})$ rounds with
high probability%
\footnote{We say that an event occurs with high probability
if it occurs with probability at least
$1-\frac{1}{\poly(n)}$. }
 (w.h.p). Hence we get an improved result
when $\Delta^{O(1/\eps)} = o(\log(n))$. In particular, for
constant $\Delta$ and $\eps$, the number of rounds is
$O(\log^*(n))$.  Note that an $O(1)$-approximation of a
maximum matching in an $n$-node ring cannot be computed by
any
deterministic
distributed algorithm in $o(\log^*(n))$
rounds~\cite{czygrinow2008fast,lenzen2008leveraging}.  For
the weighted case, they give a randomized
$(1/2-\epsilon)$-approximation algorithm that runs in
$O(\log(\eps^{-1}) \cdot \log(n))$ rounds
(w.h.p)\footnote{Lotker, Patt-Shamir and Pettie
remark~\cite[Sec. 4]{DBLP:conf/spaa/LotkerPP08} that a
$(1-\eps)$-MWM can be obtained in $O(\eps^{-4}\log^2 n)$
rounds (using messages of linear size), by adapting the
algorithm of Hougardy and Vinkemeir~\cite{HV06} (where
details are not provided in the paper).}. Our
 \mwm\ approximation algorithm 
runs in significantly fewer rounds
for various settings of the parameters $\Delta$, $1/\eps$, and $1/\wmin$.
In particular, when they are constants, the number of rounds is
$O(\log^*(n))$.

\renewcommand{\arraystretch}{2}
\begin{table*}[htb]
\ifnum\icdcn=0
\footnotesize
\fi
\centering
\begin{tabular}{| c | c | c | c || c| c | c |c |}
\hline
\multicolumn{4}{|c ||}{Previous work } &  \multicolumn{2}{|c |}{Here (Deterministic)}\\
\cline{1-6}
problem & \# rounds & success prob. & apx. ratio. & \# rounds & apx. ratio.\\
\hline\hline
\multirow{2}{*}{\mcm} &$O(\frac{\log(n)}{\eps^{3}})$ & $1-\frac{1}{\poly(n)}$ & $1-\eps$ ~\cite{DBLP:conf/spaa/LotkerPP08}&  \multirow{2}{*}{$\Delta^{O\left(\frac{1}{\eps}\right)}+O\left(\frac{1}{\eps^2}\right)\cdot \log^*(n)$} & \multirow{2}{*}{$1-\eps$} \\
& $\Delta^{O(\frac{1}{\eps})}$ & $1-\Theta(1)$ & $1-\eps$ ~\cite{onak2008}& &   [Thm.~\ref{thm:distalg}]\\
%

\hline
\multirow{2}{*}{\mwm} &$O\left(\log(\eps^{-1}) \cdot \log(n)\right)$ & $1-\frac{1}{\poly(n)}$ & $1/2-\eps$ ~\cite{DBLP:conf/spaa/LotkerPP08}&  \multirow{2}{*}{$\log^{O\left(\frac{1}{\eps}\right)}(\Gamma)\cdot\left(\Delta^{O\left(\frac{1}{\eps}\right)}+\log^*(n)\right)$} & \multirow{2}{*}{$1-\eps$} \\
%
%
%
& $O\left( \frac{\log^4 (n)}{\eps}\cdot \log
(\Gamma)\right)$ & deterministic & $1/6-\eps$ ~\cite{panconesi2010fast}& &  [Thm.~\ref{thm:distalg mwm}]\\
%
\hline
\end{tabular}
\caption{\small A comparison between \mum\ and \mwm\
\dlocal\ algorithms. The ratio between the maximum to
minimum edge weight is denoted by $\Gamma$
(we may assume that $\Gamma\leq n/\eps$).
}
   \label{tab:dist}\label{tbl}
\end{table*}
\renewcommand{\arraystretch}{1}

\subsection{Techniques}
\subsubsection{Centralized local computation algorithms}
For both the unweighted and the weighted versions of the problem we design
(deterministic) {\em Centralized Local Computation
Algorithms\/}, which translate into distributed algorithms.
Centralized local computation (\clocal) algorithms,  as defined by Rubinfeld et
al.~\cite{shaiics2011}, are algorithms that answer queries regarding (global)
solutions to computational problems by performing local (sublinear time) computations
on the input.
The answers to all queries must be consistent with a single solution regardless of the number of possible solutions.
In particular, for the problems we study, each query is an edge $e$ in the graph $G$, and
the  algorithm needs to answer whether $e$ belongs to a matching $M$ whose size (or weight)
is at least $(1-\eps)$ times the optimal. Consistency means that all answers to the queries must be according
to the same matching $M$. To this end the algorithm can
probe the graph $G$ by asking about the neighbors of vertices of its choice.
In this manner the algorithm can obtain the local neighborhood of each queried edge $e$.

\sloppy A \clocal\ algorithm may be randomized, so that the solution according to which it
answers queries may depend on its internal coin flips. However, the solution should
not depend on the sequence of the queries (this property is called query order
obliviousness~\cite{shaiics2011}).
The main performance measure of \clocal\ algorithms is the maximum number of probes performed
per query. In this work we will actually be interested in the probe-radius, that is, the maximum
distance in the graph of a probe from the queried edge. This translates into the number of rounds performed
by the corresponding distributed algorithm.

%

We believe that using the design methodology of first describing and analyzing a \clocal\
algorithm and then transforming it into a distributed local\footnote{Strictly speaking, a distributed algorithm is considered
 {\em local\/} if it performs a number of rounds that does not depend on $n$.
 Here we allow a weak dependence on $n$ (i.e., $\log^*(n)$ or even $\polylog(n)$).}
  (\dlocal) algorithm makes
both the presentation and the analysis simpler and easier to follow.
The benefit of designing \clocal\ algorithms is that it removes the need for coordination between vertices when performing
computations on auxiliary graphs (for discussion on these graphs see the following subsection).
The transformation from a \clocal\ algorithm to a \dlocal\ one is especially straightforward
when the \clocal\ algorithm is deterministic (and stateless -- Section~\ref{sec:simu}).


\subsubsection{A Global Algorithm for Approximate Maximum Cardinality Matching}
Previous \clocal\ and \dlocal-algorithms for finding an approximate maximum cardinality
matching are based on the following
framework~\cite{onak2008,DBLP:conf/spaa/LotkerPP08,shaiapprox2013}.
First consider a global/abstract algorithm
whose correctness is based on a result of Hopcroft and Karp~\cite{hopcroft1973n}.
The algorithm works iteratively, where in each iteration it constructs a new matching (starting from the empty matching).
Each new matching is constructed based on a maximal set of vertex disjoint paths that are augmenting paths with respect to the previous matching. Such a maximal set is a maximal independent set (\mis) in the intersection graph over
the augmenting paths.  (See Algorithm.~\ref{alg:global mcm} 
for precise details.)
The question is how to simulate this global algorithm in a local/distributed fashion, and
in particular, how to compute the maximal independent sets  over the intersection graphs.

\subsubsection{Local Simulation}
Our \clocal\ algorithm for approximate \mcm\
follows Nguyen and Onak's~\cite{onak2008}
sublinear algorithm for approximating the
size of a maximum matching (see also~\cite{shaiapprox2013}).
The algorithm works recursively, where
recursion is applied to determine membership in the previous matching (defined by
the global algorithm) as well as
membership in an augmenting path that belongs to the maximal set of augmenting paths.

We differ from~\cite{onak2008} and~\cite{shaiapprox2013} in the \clocal\ \mis\ algorithm that we apply,
which is the algorithm presented
in~\cite{EMR14}.
Recall that the \mis\ algorithm is applied to intersection graphs over
augmenting paths. The \mis\ algorithm works by computing an acyclic orientation of
the edges of the graph, where the radius of the orientation (the longest directed
path in the oriented graph) is $\poly(\Delta)$. This in turn is performed by coloring
the vertices in $\poly(\Delta)$ colors.  An acyclic orientation induces a partial
ordering over the vertices, which enables to (locally) simulate the simple greedy
sequential algorithm for \mis.  The main issue is analyzing the total probe-radius of
the resulting combined \clocal\ algorithm.

\subsubsection{Weighted Matchings}
Our \clocal\ algorithm for approximate \mwm\ is also based on the abovementioned \mis\ algorithm as
an ``inner'' building block. As the ``outer'' building block we use a result
described in~\cite{onakthesis} (and mentioned in~\cite{onak2008}) for approximating
the maximum weight of a matching, which in turn builds on work of Pettie and
Sanders~\cite{DBLP:journals/ipl/PettieS04}.

\subsection{Related Work}
We compare our results to previous ones in Table~\ref{tbl}.
The first line refers to the aforementioned algorithm by Lotker, Patt-Shamir, and
Pettie~\cite{DBLP:conf/spaa/LotkerPP08} for the unweighted case.
The second line in Table~\ref{tbl}
refers to an algorithm of Nguyen and Onak~\cite{onak2008}.
As they observe, their algorithm for approximating the size
of a maximum matching in sublinear time can be transformed
into a randomized distributed algorithm that succeeds with constant
probability (say, $2/3$) and runs in $\Delta^{O(1/\eps)}$
rounds.
The third line refers to the aforementioned algorithm by Lotker, Patt-Shamir, and
Pettie~\cite{DBLP:conf/spaa/LotkerPP08} for the weighted case.
The fourth line refers to the algorithm by Panconesi and
Sozio~\cite{panconesi2010fast} for weighted matching. They devise a deterministic
distributed $(1/6-\eps)$-approximation algorithm
that runs in $O\left(
  \frac{\log^4 (n)}{\eps}\cdot \log (\Gamma)\right)$ rounds, where $\Gamma$ is the ratio between the
maximum to minimum edge weight.

We remark that the randomized \clocal-algorithm by Mansour
and Vardi~\cite{shaiapprox2013} for $(1-\eps)$-approximate
maximum cardinality matching in bounded-degree graphs can be transformed into a
randomized \dlocal-algorithm for $(1-\eps)$-approximate
maximum cardinality matching (whose success probability is $1-1/\poly(n)$).
Their focus is on bounding the number
of probes, which they show is polylogarithmic in $n$ for constant $\Delta$ and $\eps$.
To the best of our understanding, an analysis of the probe-radius of their
algorithm will not imply a \dlocal-algorithm that runs in fewer rounds
than the algorithm of
Lotker, Patt-Shamir, and Pettie~\cite{DBLP:conf/spaa/LotkerPP08}.

\else 
{\em Local Computation Algorithms\/}, as defined by Rubinfeld et
al.~\cite{shaiics2011}, are algorithms that answer queries regarding (global)
solutions to computational problems by performing local (sublinear time) computations
on the input. The answers to all queries must be consistent with a single solution regardless of the number of possible solutions.  To make this notion
concrete, consider the {\em Maximal Independent Set\/} problem, which we denote by
\mis.  Given a graph $G = (V,E)$ as input, the local algorithm \alg\ gives the
illusion that it ``holds'' a specific maximal independent set $I \subseteq V$.  Namely, given
any vertex $v$ as a query, \alg\ answers whether $v$ belongs to $I$ even though
\alg\ cannot read all of $G$, cannot store the solution $I$, and cannot even remember all the answers to previous queries.  In order to answer such queries, $\alg$
can probe the graph $G$ by asking about the neighbors of a vertex of its choice.

A local algorithm may be randomized, so that the solution according to which it
answers queries may depend on its internal coin flips. However, the solution should
not depend on the sequence of the queries (i.e., this property is called query order oblivious~\cite{shaiics2011}).  The performance of a local
computation algorithm is measured by the following criteria: the maximum number of probes it makes
to the input per query, the success probability over any sequence of queries, and the
maximum space it uses between queries. It is desired that both the probe complexity
and the space complexity of the algorithm be sublinear in the size of the graph
(see e.g., ${\rm polylog}(|V|)$), and that the success probability be $1-1/\poly(|V|)$.
It is usually assumed that the maximum degree of the graph is upper-bounded by a
constant, but our results are useful also for non-constant upper bounds.
For a formal definition of local algorithms in the context of graph problems,
which is the focus of this work, see Subsection~\ref{sec:cmodel}.

The motivation for designing local computation algorithms is that local computation algorithms capture difficulties with very large inputs.
A few examples include:
\begin{inparaenum}[(1)]
  \item Reading the whole input is too costly if the input is very long.
  \item In certain situations one is interested in a very small part of a complete solution.
  \item Consider a setting in which different uncoordinated servers need to answer queries about a very long input stored in the cloud.
      The servers do not communicate with each other, do not store answers to previous queries, and want to minimize their accesses to the input.
\end{inparaenum}

Local computation algorithms have been designed for various graph (and hypergraph) problems,
including the abovementioned \mis~\cite{shaiics2011,shaisoda2012},
hypergraph coloring~\cite{shaiics2011,shaisoda2012},
maximal matching~\cite{shaiicalp2012}
and (approximate) maximum matching~\cite{shaiapprox2013}.
Local computation algorithms also appear  implicitly in works on sublinear
approximation algorithms for various graph parameters, such as the size of
a minimum vertex cover~\cite{parnasron,onak2008,yoshida2009improved,dana2012}.
Some of these implicit results are very efficient in terms of their probe complexity
(in particular, it depends on the maximum degree and not on $|V|$)
but do not give the desired $1-1/\poly(|V|)$ success probability. We compare our
results to both the explicit and implicit relevant known results.

\ifnum\esa=0
As can be gleaned from the definition in~\cite{shaiics2011},
local computation algorithms are closely related~\cite{parnasron} to {\em Local Distributed Algorithms\/}.
We discuss the similarities and differences in more detail in Subsection~\ref{subsec:rel-clocal-dlocal}.
In this work, we exploit this relation in two ways. First, we
use techniques from the study of local distributed algorithms to obtain better local computation algorithms.
Second, we apply techniques
from the study of local computation algorithms (more precisely, local computation algorithms that
are implicit within sublinear approximation algorithms) to
obtain a new result in distributed computing.
\else
As can be gleaned from the definition in~\cite{shaiics2011},
local computation algorithms are closely related to {\em Local Distributed Algorithms}~\cite{parnasron}.
Such a connection is discussed in Section~\ref{sec:dlocal-def}.
\fi

\sloppy In what follows we denote the aforementioned local
computation model by \clocal\ (where the ``\textsc{Cent}''
stands for ``centralized'') and the distributed (local)
model
 by \dlocal\ (for a formal definition of the latter,
see Subsection~\ref{sec:dlocal-def}).
We denote the number of vertices in the input graph by $n$ and the maximum degree by
$\Delta$.

\ifnum\esa=0 \sloppy
\subsection{On the relation between \clocall\ and \dlocall}\label{subsec:rel-clocal-dlocal}
 The \clocal\ model is centralized in the sense
that there is a single central algorithm that is provided access to the whole graph.
This is as opposed to the \dlocal\ model in which each processor resides in
a graph vertex $v$ and can obtain information only about the neighborhood of $v$.
Another important difference is in the main complexity measure. In the \clocal\ model,
one counts the number of probes that the algorithm performs per query, while
in the \dlocal\ model, the number of rounds of communication is counted.
This implies that a \dlocal\ algorithm always obtains information about a ball centered
at a vertex, where the radius of the ball is the number of rounds of communication. On the other hand,
in the case of a \clocal\ algorithm, it might choose to obtain information about different
types of neighborhoods so as to save in the number of probes. Indeed (similarly
to what was observed in the context of sublinear approximation algorithms~\cite{parnasron}),
given a \dlocal\ algorithm
for a particular problem with round complexity $r$, we directly obtain a \clocal\ algorithm
whose probe complexity is $O(\Delta^r)$ where $\Delta$ is the maximum degree in the graph.
However, we might be able to obtain lower probe complexity if we don't apply such a
black-box reduction. In the other direction, \clocal\ algorithms with certain properties, can be transformed into
\dlocal\ algorithms.
\fi

\subsection{The Ranking Technique}
The starting point for our results in the \clocal\ model is the
\emph{ranking} technique~\cite{onak2008,yoshida2009improved,shaisoda2012,shaiicalp2012,shaiapprox2013}. To exemplify this, consider, once again, the \mis\ problem.
A very simple (global ``greedy'') algorithm for this problem works by
selecting an arbitrary ranking of the vertices and initializing $I$ to be empty.
The algorithm then considers the vertices one after the other according to their
ranks and adds a vertex to $I$ if and only if it does not neighbor any vertex
already in $I$. Such an algorithm can be ``localized'' as follows. For a fixed
ranking of the vertices (say, according to their IDs), given a query on a vertex $v$,
the local algorithm performs a {\em restricted\/} DFS starting from $v$. The
restriction is that the search continues only on paths with monotonically decreasing
ranks. The local algorithm then simulates the global one on the subgraph induced by
this restricted DFS.

The main problem with the above local algorithm is that the number of probes it performs
when running the DFS may be very large. Indeed, for some rankings (and queried vertices), the number
of probes is linear in $n$. In order to circumvent this problem,
{\em random\/} rankings were studied~\cite{onak2008}. This brings up two questions, which were studied in previous works,
both for the \mis\ algorithm described above~\cite{onak2008,yoshida2009improved}
and for other ranking-based
algorithms~\cite{shaisoda2012,shaiicalp2012,shaiapprox2013}.
The first is to bound
the number of probes needed to answer a query with high probability. The second is
how to efficiently store a random ranking between queries.

\subsection{Our Contributions}
\label{subsec:results-clocal}
\paragraph{Deterministic Stateless \clocal\ Algorithms as a
Design Methodology for \dlocal\ Algorithms}

\paragraph{Deterministic \dlocal\ Approximation Scheme for \mcm}

\paragraph{Deterministic \dlocal\ Approximation Scheme for \mwm}

\paragraph{Orientations with bounded reachability}
Our first conceptual contribution is a simple but very useful observation. Rather
than considering vertex rankings, we suggest to consider {\em acyclic orientations\/}
of the edges in the graph. Such orientations induce partial orders over the vertices,
and partial orders suffice for our purposes. The probe complexity induced by a given
orientation translates into a combinatorial measure, which we refer to as the {\em
  reachability\/} of the orientation. Reachability of an acyclic orientation is the
maximum number of vertices that can be reached from any start vertex by
directed paths (induced by the orientation). This leads us to the quest for a
\clocal\ algorithm that computes an orientation with bounded reachability.

\paragraph{Orientations and colorings}
Our second conceptual contribution is that an orientation algorithm with bounded reachability can be based on a \clocal\ {\em coloring\/} algorithm.
Indeed, every vertex-coloring with $k$ colors induces an orientation with reachability
$O(\Delta^k)$. Towards this end, we design a \clocal\ coloring algorithm that applies techniques from \dlocal\ colorings algorithms~\cite{cole1986deterministic,goldberg1988parallel,linial,panconesi2010fast}.
 Our \clocal\  algorithm is deterministic, does not use any space between queries, performs
 $O(\Delta\cdot \log^*(n)+\Delta^2)$ probes per query, and computes a coloring
 with $O(\Delta^2\log(\Delta))$ colors. (We refer to the problem of coloring a graph by $\poly(\Delta)$ colors as \pcolor.) Our coloring algorithm yields an orientation
 whose reachability is $\Delta^{O(\Delta^2\log(\Delta))}$.
 For constant degree graphs, this implies $O(\log^*(n))$ probes to obtain an orientation with constant reachability.
 As an application of this orientation algorithm, we also  design a \clocal\ algorithm for $(\Delta+1)$-coloring.

\paragraph{Local centralized simulations of sequential algorithms}
We apply a general transformation (similarly to what was shown in~\cite{shaisoda2012}) from global
algorithms with certain properties to local algorithms.
The transformation is based on our \clocal\ orientation with bounded reachability algorithm.
As a result we get
deterministic \clocal\ algorithms for \mis\ and maximal matching (\mm), which significantly
improve over previous work~\cite{shaiics2011,shaisoda2012,shaiicalp2012}, and the first \clocal\
algorithm for coloring with $(\Delta+1)$ colors (We refer to the problem of coloring a graph by $\Delta+1$ colors as \dcolor). Compared to previous work,
for \mis\ and \mm\ the dependence on $n$ in the probe complexity is reduced from
${\rm polylog}(n)$ to $\log^*(n)$ and the space needed to store the state between queries is reduced from ${\rm
  polylog}(n)$ to zero.

\ifnum\esa=0
\paragraph{The recursive local improvement technique}
Our final result in the \clocal\ model is a $(1-\eps)$-approximation algorithm for
maximum matching (\mum). This algorithm applies an additional technique,
introduced by Nguyen and Onak~\cite{onak2008} in the design of their sublinear algorithm
for approximating the size of a maximum matching. This technique, which we refer to
as {\em Local Improvement\/} builds on
augmenting paths. Roughly speaking, it consists of recursive applications of a \clocal\ algorithm for
\mis\ to certain auxiliary graphs (defined by augmenting paths). Here we also
reduce the dependence of the probe-complexity
on $n$ from  ${\rm polylog}(n)$~\cite{shaiapprox2013} to $\log^*(n)$ and the space needed to store the state between queries
is reduced from ${\rm polylog}(n)$ to $0$.
\fi

\ifnum\esa=0
\paragraph{A deterministic \dlocal\ algorithm for approximate maximum matching}
Let $r$ denote an upper bound on the radius of the ball that contains all the probes
of our \clocal\ algorithm for approximate maximum matching ($(1-\eps)$-\mum). A
deterministic \dlocal$[r]$-algorithm for $(1-\eps)$-\mum\ follows immediately.
Indeed, $r$ is $\Delta^{O(1/\eps)}+O\left(\frac{1}{\eps^2}\right)\cdot \log^*(n)$.
In~\cite{DBLP:conf/spaa/LotkerPP08} a randomized \dlocal$[O((\log(n))/\eps^{3})]$-algorithm was presented for $(1-\eps)$-\mum. The number of rounds
in~\cite{DBLP:conf/spaa/LotkerPP08} does not depend on $\Delta$, which is, therefore,
unbounded. (See also~\cite{DBLP:journals/siamcomp/LotkerPR09,panconesi2010fast} for
constant approximation \dlocal$[\polylog (n))]$-algorithms for maximum weighted
matching.)  Hence we get an improved result when $\Delta^{O(1/\eps)} = o(\log(n))$
(and, in particular, for constant $\Delta$ and $\eps$).  Note that an
$O(1)$-approximation of a maximum matching in a $n$-node ring cannot be computed by a
\dlocal$[O(1)]$-algorithm~\cite{czygrinow2008fast,lenzen2008leveraging}.
\fi

\subsection{Comparison to Previous Work}\label{subsec:rel-work}

\paragraph{Comparison to previous (explicit) \clocal\ algorithms}
A comparison of our results with previous \clocal\ algorithms is summarized in
Table~\ref{tab:app}. The dependence on $\Delta$
 \ifnum\esa=0 and $\eps$ \fi of previous algorithms is not explicit; the dependency in Table~\ref{tab:app} is based on our understanding of these results.

\paragraph{Comparison to previous \clocal\ oracles in sublinear approximation
  algorithms}
A sublinear approximation algorithm for a certain graph parameter (e.g., the size of a minimum
vertex cover) is given probe access to the input graph and is required to output an approximation
of the graph parameter with high (constant) success probability. Many such algorithms work by
designing an {\em oracle\/} that answers queries (e.g., a query can ask: does a given vertex belong to a
fixed small vertex cover?). The sublinear approximation algorithm estimates the graph parameter
by performing (a small number of) queries to the oracle. The oracles are essentially
\clocal\ algorithms but they tend to have constant error probability, and it is not
clear how to reduce this error probability without significantly increasing their probe
complexity. Furthermore, the question of bounded space needed to store the state between queries was not an issue in the design
of these oracles, since only few queries are performed by the sublinear approximation
algorithm. Hence, they are not usually considered to be ``bona fide'' \clocal\ algorithms.
A comparison of our results and these oracles appears in Table~\ref{tab:app2}.

\paragraph{Previous \dlocal\ algorithms for \mum}
In~\cite{DBLP:conf/spaa/LotkerPP08} randomized algorithms
for finding approximately optimal matchings in both
weighted and unweighted graphs. For unweighted graphs, they
give an algorithm providing at least
$(1-\eps)$-approximation  in $O(\eps^{-3}\cdot \log(n))$
rounds for any $\eps > 0$ (w.h.p). For weighted graphs,
they give an algorithm providing at least
$(2+\eps)$-approximation in $O(\log(\eps^{-1} \cdot
\log(n)))$ rounds for any $\eps >0$ (w.h.p).
In~\cite{DBLP:journals/siamcomp/LotkerPR09} a randomized
$(4+\epsilon)$-approximation distributed algorithm for
maximum \emph{weighted} matching, whose running time is
$O(\eps^{-1}\cdot \log(\eps^{-1}) \cdot \log(n))$ for any
$\epsilon>0$ (w.h.p), where $n$ is the number of nodes in
the graph. In~\cite{panconesi2010fast} the authors solve
the weighted $b$-matching problem, which is the
generalization of the weighted matching problem where for
each vertex $v$, at most $b(v)$ edges incident to v, can be
included in the matching. For this problem they obtain a
randomized distributed algorithm with approximation
guarantee of $(6+\eps)$ , for any $\eps > 0$ that requires
$O\left( \frac{\log^3 (n)}{\eps^3}\cdot \log^2
(\Gamma)\right)$ rounds , where $\Gamma$ is the ratio
between the maximum to minimum edge weight. For weighted
matching, they devise a deterministic distributed algorithm
with the same approximation ratio that requires $O\left(
\frac{\log^4 (n)}{\eps}\cdot \log (\Gamma)\right)$.

\begin{table}[htb]
\footnotesize
\centering
\begin{tabular}{| c | c | c || c| c | c |c |}
\hline
\multicolumn{3}{|c ||}{Previous work } &  \multicolumn{2}{|c |}{Here (Deterministic)}\\
\cline{1-5}
\# Rounds & success prob. & apx. ratio & \# Rounds & apx. ratio\\
\hline\hline
$O(\eps^{-3}\cdot \log(n))$ & $1-\frac{1}{\poly(n)}$ & $1-\eps$ ~\cite{DBLP:conf/spaa/LotkerPP08}&  \multirow{4}{*}{$\Delta^{O(1/\eps)}+O\left(\frac{1}{\eps^2}\right)\cdot \log^*(n)$} & \multirow{4}{*}{$1-\eps$} \\
$O(\eps^{-1}\cdot \log\eps^{-1} \cdot \log(n))$ & $1-\frac{1}{\poly(n)}$ & $4+\eps$ ~\cite{DBLP:journals/siamcomp/LotkerPR09}& &  \\
$O( \eps^{-3} \cdot \log^3 n)$ & $1-\frac{1}{\poly(n)}$ & $6+\eps$ ~\cite{panconesi2010fast}& &  \\
$O( \eps^{-1} \cdot \log^4 n)$ & deterministic & $6+\eps$ ~\cite{panconesi2010fast}& & [Thm.~\ref{thm:distalg}] \\
\hline
\end{tabular}
\caption{\small A comparison between Previous \mum\ \dlocal\ algorithms and ours.}
   \label{tab:dist}
\end{table}

\fi 

\section{Preliminaries}
\subsection{Notations}
Let $G=(V,E)$ denote an undirected graph.  Let $n(G)$
denote the number of vertices. We denote the degree of $v$ by $\degree_G(v)$.
Let $\Delta(G)$ denote the maximum degree, i.e., $\Delta(G)
\eqdf \max_{v\in V}\{\degree_G(v)\}$. The length of a path
equals the number of edges along the path. We denote the
length of a path $p$ by $|p|$. For $u,v \in V$ let
$\dist_G(u,v)$ denote the length of the shortest path
between $u$ and $v$ in the graph $G$. The ball of radius
$r$ centered at $v$ in the graph $G$ is defined by
\[
    B^G_r(v) \triangleq \{u \in V \mid \dist_G(v,u) \leq r\}\:.
\]
If the graph $G$ is clear from the context, we may drop it
from the notation, e.g., we simply write $n,m,\degree(v)$, or
$\Delta$.

For $k \in \NN^+$ and $n >0$, let $\log^{(k)} (n)$ denote
the $k$th iterated logarithm of $n$. Note that $\log ^{(0)}
(n) \triangleq n$ and if $\log^{(i)} (n)=0$, we define
$\log^{(j)} (n) = 0$, for every $j>i$.
For $n \geq 1$, define $\log^{*} (n)\triangleq \min \{i:
\log^{(i)}(n) \leq 1\}$.

A subset $I\subseteq V$ is an \emph{independent set} if no two vertices in $I$ are an
edge in $E$. An independent set $I$ is \emph{maximal} if $I\cup\{v\}$ is not an
independent set for every $v\in V\setminus I$. We use \mis\ as an abbreviation of a
maximal independent set.

A subset $M\subseteq E$ is a matching if no two edges in $M$ share an endpoint.  Let
$M^*$ denote a maximum cardinality matching of $G$.  We say that a matching $M$ is a
$(1-\eps)$-approximate maximum matching if
\[
    |M| \geq (1-\eps)\cdot|M^*|\:.
\]
Let $w(e)$ denote the weight of an edge $e\in E$. The weight of a subset $F\subseteq
E$ is $\sum_{e\in F} w(e)$ and is denoted by $w(F)$.  Let $M_w^*$ denote a maximum
weight matching of $G$. A matching $M$ is a $(1-\eps)$-approximate maximum weight matching
if $w(M) \geq (1-\eps)\cdot w(M_w^*)$. We abbreviate the terms maximum cardinality
matching and maximum weight matching by \mcm\ and \mwm, respectively.
\subsection{The \dlocall\ Model}\label{sec:dlocal-def}
The model of local distributed computation is a classical
model (e.g.,~\cite{linial,pelegbook,dsurvey}). A
distributed computation takes place in an undirected
labeled graph $G=(V,E)$.  In a labeled graph vertices have
unique identifies (IDs). The neighbors of each vertex $v$
are numbered from $1$ to $\deg(v)$ in an arbitrary but
fixed manner. Ports are used to point to the neighbors of
$v$; the $i$th port points to the $i$th neighbor.  Each
vertex in the labeled graph models a processor, and
communication is possible only between neighboring
processors.  All processors execute the same algorithm.
Initially, every $v \in V$ is input a local input.  The
computation is done in $r \in \NN$ synchronous rounds as
follows.  In every round: (1)~every processor receives a
message from each neighbor, (2)~every processor performs a
computation based on its local input and the messages
received from its neighbors, (3)~every processor sends a
message to each neighbor.  We assume that a message sent in
the end of round $i$ is received in the beginning of round
$i+1$.  After the $r$th round, every processor computes a
local output.

The following assumptions are made in the \dlocal\ model:
\begin{inparaenum}[(1)]
\item The local input to each vertex $v$ includes the ID of $v$, the degree of the
  vertex $v$, the maximum degree $\Delta$, the number of vertices $n$, and the ports
  of $v$ to its neighbors.
\item The IDs are distinct and bounded by a polynomial in $n$.
\item The length of the messages sent in each round is not bounded.
\end{inparaenum}

We say that a distributed algorithm is a \emph{$\dlocal[r]$-algorithm} if the number
of communication rounds is $r$.  Strictly speaking, a distributed algorithm is
considered {\em local\/} if $r$ is bounded by a constant.  We say that a
$\dlocal[r]$-algorithm is \emph{almost local} if $r=O(\log^*(n))$.  When it is obvious
from the context we refer to an almost \dlocal\ algorithm simply by a \dlocal\
algorithm.


\subsection{The \clocall\ Model}\label{sec:cmodel}
In this section we present the model of centralized local computations that was defined
in~\cite{shaiics2011}.  The presentation focuses on problems over labeled graphs
(i.e., maximal independent set and maximum matching).

\paragraph{Probes}
In the \clocal\ model, access to the labeled graph is
limited to probes.  A \emph{probe} is a pair $(v,i)$ that
asks ``who is the $i$th neighbor of $v$?''.  The answer to
a probe $(v,i)$ is as follows.  (1)~If $\degree_G(v)<i$,
then the answer is ``null''. (2)~If $\degree_G(v)\geq i$,
then the answer is the (ID of) vertex $u$ that is pointed
to by the $i$th port of $v$. For simplicity, we assume that
the answer also contains the port number $j$ such that $v$
is the $j$th neighbor of $u$. (This assumption reduces the
number of probes by at most a factor of $\Delta$.)

\paragraph{Online Property of \clocal-algorithms}
The set of solutions for problem $\Pi$ over a labeled graph $G$ is denoted by
$\sol(G,\Pi)$.  A deterministic \clocal-algorithm \alg\ for problem $\Pi$ over
labeled graphs is defined as follows.  The input for the algorithm consists of three
parts: (1)~access to a labeled graph $G$ via probes, (2)~the number of vertices $n$
and the maximum degree $\Delta$ of the graph $G$, and (3)~a sequence
$\{q_i\}_{i=1}^N$ of queries.  Each query $q_i$ is a request for an evaluation of
$f(q_i)$ where $f\in \sol(G,\Pi)$. Let $y_i$ denote the output of \alg\ to query
$q_i$. We view algorithm \alg\ as an online algorithm because it must output $y_i$ without any
knowledge of subsequent queries.

A \clocal-algorithm $\alg$ for $\Pi$ must satisfy the following condition, called
\emph{consistency},
\begin{equation}\label{eq:const-def}
    \exists f \in  \sol(G,\Pi) \mbox{ s.t. }~\forall N \in \NN ~~\forall \{q_i\}_{i=1}^N ~~\forall i ~:~y_i = f(q_i)\:.
\end{equation}

\paragraph{Resources and Performance Measures}
The main performance measure is the \emph{maximum number of
probes} that the \clocal-algorithm performs per query.
We consider an additional measure called \emph{probe
radius}.  The probe radius of a \clocal-algorithm $C$ is
$r$ if, for every query $q$, all the probes that algorithm
$C$ performs in $G$ are contained in the ball of radius $r$
centered at $q$.

\paragraph{Stateless Algorithms}
A \emph{state} of a \clocal-algorithm is the maximum number of
bits stored between consecutive queries.
A \clocal-algorithm is \emph{stateless} if the algorithm
does not store any information between queries. In
particular, a stateless algorithm does not store previous
queries, answers to previous probes, or answers given to
previous queries.\footnote{We remark that
in~\cite{shaiics2011} no distinction was made between the
  space needed to answer a query and the space needed to store the state between
  queries. Our approach is different and follows the \dlocal\ model in which one does
  not count the space and running time of the vertices during the execution of the
  distributed algorithm. Hence, we ignore the space and running time of the
  \clocal-algorithm during the processing of a query.}
  In this paper all our \clocal-algorithms are stateless.

\paragraph{Example} Consider the problem of computing a maximal independent set.
The \clocal-algorithm is input a sequence of queries, each
of which is a vertex. The algorithm outputs whether $q_i$
is in $I$, for some maximal independent set $I\subseteq V$.
Consistency in this example means that the algorithm has to
satisfy this specification even though it does not probe
all of $G$, and obviously does not store the maximal
independent set $I$.  Moreover, a stateless algorithm does
not even remember the answers it gave to previous queries.
Note that if a vertex is queried twice, then the algorithm
must return the same answer.  Similarly, if two queries are
neighbors, then the algorithm may not answer that both are
in the independent set.  If all vertices are queried, then
the answers constitute the maximal independent set $I$.

\subsection{Simulation of \clocall\ by
\dlocall}\label{sec:simu}

Based on an observation made in~\cite{parnasron} in a slightly different setting,
\clocal-algorithms can simulate \dlocal-algorithms.
In this section we consider simulations in the
converse direction.

The following definition considers \clocal-algorithms whose queries are vertices of a
graph. The definition can be easily extended to edge queries.
\begin{definition}
  A \dlocal-algorithm $D$ simulates a \clocal-algorithm $C$ if, for every vertex $v$,
  the local output of $D$ in vertex $v$ equals the answer that algorithm $C$ computes
  for the query $v$.
\end{definition}

The following proposition states that \clocal-algorithm can be simulated by a
$\dlocal[r]$ algorithm provided that the probe radius is $r$. Message lengths grow at
a rate of $O(\Delta^{r+1} \cdot \log n)$ as information (e.g., IDs and existence of
edges) is accumulated.
\begin{proposition}\label{prop:simul}
  Every stateless deterministic \clocal-algorithm $C$ whose probe radius is at most $r$ can be
  simulated by a deterministic $\dlocal[r]$-algorithm $D$.
\end{proposition}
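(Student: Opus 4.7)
The plan is to have each processor locally reconstruct the entire $r$-ball around itself, and then \emph{run $C$ in its head} on its own vertex, treating itself as the sole query posed to $C$.

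First, I would describe the information-gathering phase. In round $i$ (for $i=1,\dots,r$), every processor sends its entire current knowledge to all neighbors, and updates its knowledge with whatever it receives. A straightforward induction on $i$ shows that after round $i$, processor $v$ knows the induced subgraph on $B^G_i(v)$, together with the IDs of all these vertices and the port numbering at each of them. Hence after $r$ rounds, $v$ holds a complete labeled description of $B^G_r(v)$. This is exactly the source of the $O(\Delta^{r+1}\log n)$ message-length blowup mentioned in the statement: in the worst case a message at round $i$ describes up to $\Delta^{i+1}$ adjacencies, each encoded with $O(\log n)$ bits.

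Next, in the local post-processing step, each processor $v$ simulates algorithm $C$ on the single query $q_1 = v$. Whenever the simulated $C$ issues a probe $(u,j)$, the processor answers it directly from its stored copy of $B^G_r(v)$. This is where I use the probe-radius hypothesis: by assumption every probe that $C$ makes while answering query $v$ lies inside $B^G_r(v)$, so every probe answer the simulation needs is available locally. The processor then outputs whatever $C$ outputs, and this becomes its DLOCAL output at $v$.

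The correctness argument is where the \emph{stateless} and \emph{deterministic} hypotheses do the work, and this is the only real subtlety. Because $C$ is stateless, its computation on a query $q$ depends only on $q$, on the algorithm's code, and on the answers to the probes $C$ issues while serving $q$; no prior queries, prior probes, or internal memory influence the output. Because $C$ is deterministic, running $C$ on the singleton query sequence $(v)$ always yields the same output $y_v$ regardless of which processor is doing the simulation. Applying the consistency guarantee~(\ref{eq:const-def}) of $C$ to the query sequence $q_1,q_2,\dots$ enumerating all vertices of $V$ in any order, there is a single $f\in\sol(G,\Pi)$ such that $y_v = f(v)$ for every $v$; hence the collection of local outputs produced by $D$ is exactly $f$, so $D$ simulates $C$ in the sense of the definition above. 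The main obstacle one has to be careful about is precisely this point --- the argument would fail for a stateful algorithm, since then the answer $C$ gives to query $v$ could depend on a hypothetical sequence of prior queries that different processors would have to agree on without communication beyond the $r$-ball.
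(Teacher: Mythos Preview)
Your proof is correct and follows exactly the same approach as the paper: collect the full labeled $r$-ball at each vertex in $r$ rounds, then locally simulate $C$ on the query $v$, using the probe-radius bound to guarantee that all needed probe answers are available. Your write-up is in fact more detailed than the paper's (which leaves the consistency argument implicit), but the underlying idea is identical.
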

\begin{proof}
  The distributed algorithm $D$ collects, for every $v$, all the information in the
  ball of radius $r$ centered at $v$. (This information includes the IDs of the
  vertices in the ball and the edges between them.)

  After this information is collected, the vertex $v$ locally runs the
  \clocal-algorithm $C$ with the query $v$. Because algorithm $C$ is stateless, the
  vertex has all the information required to answer every probe of $C$.
\end{proof}

Proposition~\ref{prop:simul} suggests a design methodology for distributed
algorithms. For example, suppose that we wish to design a distributed algorithm for
maximum matching. We begin by designing a \clocal-algorithm $C$ for computing a
maximum matching. Let $r$ denote the probe radius of the \clocal-algorithm $C$. The
proposition tells us that we can compute the same matching (that is computed by $C$)
by a distributed $r$-round algorithm.

\section{Acyclic Orientation with Bounded Radius (\oradd)} \label{sec:obr}
In this section we define the problem of \emph{Acyclic Orientation with Bounded Radius} (\orad).
We then design a \clocal\ algorithm for \orad\ based on
vertex coloring.

\paragraph{Definitions}
An \emph{orientation} of an undirected graph $G=(V,E)$ is a
directed graph $H=(V,A)$, where $\{u,v\}\in E$ if and only
if $(u,v)\in A$ or $(v,u)\in A$ but not both. An
orientation $H$ is \emph{acyclic} if there are no directed
closed paths in $H$. The \emph{radius} of an acyclic
orientation $H$ is the length of the longest directed path
in $H$. We denote the radius of an orientation by
$\rad(H)$.  In the problem of acyclic orientation with
bounded radius (\orad), the input is an undirected graph.
The output is an orientation $H$ of $G$ that is acyclic.
The goal is to compute an acyclic orientation $H$ of $G$
that minimizes $\rad(H)$.

As in~\cite{EMR14}, an acyclic orientation is induced by a vertex coloring. Previous
works obtain an acyclic orientation by random vertex
ranking~\cite{onak2008,yoshida2009improved,shaisoda2012,shaiicalp2012,shaiapprox2013}.

\begin{proposition}[Orientation via coloring]\label{prop:color2ori}
  Every coloring by $c$ colors induces an acyclic  orientation $H$ with
  \[
    \rad(H) \leq c-1.
  \]
\end{proposition}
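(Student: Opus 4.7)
The plan is to exhibit an explicit orientation derived from any proper $c$-coloring and then read off both acyclicity and the radius bound from the fact that colors strictly increase along directed edges. Concretely, given a proper coloring $\chi : V \to \{1, \ldots, c\}$, orient every edge $\{u,v\} \in E$ from the endpoint of smaller color to the endpoint of larger color. Because $\chi$ is proper, $\chi(u) \neq \chi(v)$ for every edge, so exactly one of the two orientations is chosen; this is a well-defined orientation $H$ of $G$.

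Next I would verify acyclicity. If $(v_0, v_1, \ldots, v_k)$ is any directed path in $H$, then by construction $\chi(v_0) < \chi(v_1) < \cdots < \chi(v_k)$. In particular, $v_0 \neq v_k$ whenever $k \geq 1$, which rules out closed directed paths and hence directed cycles.

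For the radius bound I would use the same monotonicity. The strictly increasing sequence of colors $\chi(v_0) < \chi(v_1) < \cdots < \chi(v_k)$ takes values in $\{1, \ldots, c\}$, so $k + 1 \leq c$, i.e.\ the number of edges is at most $c - 1$. Since this holds for every directed path in $H$, we get $\rad(H) \leq c-1$, as required.

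I do not expect any real obstacle here: the statement is essentially a one-line observation, and the only thing to be careful about is that ``proper coloring'' is indeed needed (to ensure the orientation is well defined on every edge) and that the radius is measured in edges, not vertices, matching the convention stated earlier in the Preliminaries.
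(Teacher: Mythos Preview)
Your proof is correct and is essentially the same argument as the paper's: orient edges by color and use strict monotonicity of colors along directed paths to obtain both acyclicity and the bound $\rad(H)\le c-1$. The only cosmetic difference is that the paper orients from high color to low color while you orient from low to high, which of course makes no difference.
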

\begin{proof}
  Direct each edge from a high color to a low color.  By monotonicity the orientation
  is acyclic.  Every directed path has at most $c$ vertices, and hence the
  radius is bounded as required.
\end{proof}

Many distributed coloring algorithms find a vertex coloring
in $O(\log^*(n) + \poly(\Delta))$ rounds (giving us the
same upper bound on the probe-radius of the corresponding
\clocal-algorithm) and use $\poly(\Delta)$ colors~(see, for
example,
\cite{barenboim2009distributed,linial,cole1986deterministic,panconesi2001some,kuhn2009weak}).
For concreteness, in this paper,  we employ a \clocal\
simulation of a distributed vertex coloring algorithm with
$O(\log^*(n) + \poly(\Delta))$ rounds that uses
$\poly(\Delta)$ colors.

We remark that a
\clocal-algorithm
for \orad\ simply
computes, for every vertex $v$ and every port $i$, whether
the edge incident to $v$ at port $i$ is an incoming edge or
an outgoing edge in the orientation.

\section{A \clocall-Algorithm for \miss}\label{sec:lin}
In this section we briefly describe a \clocal-algorithm for
the \mis\ problem.  The algorithm is a special case of a
more general technique of designing \clocal-algorithms from
``greedy'' (global) sequential
algorithms~(see~\cite{EMR14,shaiicalp2012}).

Suppose we wish to compute a maximal independent set $\mis$ of a graph $G=(V,E)$.
The greedy algorithm proceeds by scanning the vertices in some ordering $\sigma$. A
vertex $v$ is added to the \mis\ if none of its neighbors that appear before $v$ in
$\sigma$ have been added to the \mis.  Let $\mis_\sigma$ denote the \mis\ that is
computed by the greedy algorithm if the vertices are scanned by the ordering
$\sigma$.

Every acyclic orientation $H=(V,A)$ of $G$ induces a partial order $P_H$ simply by
considering the transitive closure of $H$.  The key observation is that
$\mis_{\sigma}= \mis_{\tau}$ for every two linear orderings $\sigma$ and $\tau$ that
are linear extensions of $P_H$. Let $\mis_{P_H}$ denote the \mis\ that corresponds to
$\mis_{\sigma}$ for linear extensions $\sigma$ of $P_H$.

 A \clocal-algorithm can compute
$\mis_{P_H}$ as follows. Given $v$, the algorithm performs
a directed DFS from $v$ according to the directed edges
$A$. When the DFS backtracks from a node $u$, it adds $u$
to $\mis_{P_H}$ if none of the descendants of $u$ in
the DFS tree are in $\mis_{P_H}$.

\medskip\noindent
We conclude with the following lemma that summarizes the above description.
\begin{lemma}\label{lemma:simul}
  Let $\ao$ denote a stateless \clocal-algorithm that computes an acyclic orientation
  $H=(V,A)$ of a graph $G=(V,E)$.  Let $r$ denote the probe radius of $\ao$.
  Then, there exists a stateless \clocal-algorithm for \mis\ whose probe radius is at most
  $r+\rad(H)$.
\end{lemma}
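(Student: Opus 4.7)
The plan is to exhibit an explicit stateless \clocal-algorithm for \mis\ that, on any edge or vertex query, runs the sequential greedy \mis\ algorithm on a subgraph determined by the orientation $H$, and then to bound the radius of the region it probes.

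First I would describe the algorithm. On a query $v$, the algorithm performs a directed depth-first search starting at $v$ that only traverses outgoing edges (with respect to $H$). To learn whether a given edge $\{u,w\}$ incident to a visited vertex $u$ is outgoing from $u$, the algorithm invokes $\ao$ as a subroutine with the appropriate port query at $u$. Once the DFS has explored the subgraph $R(v)$ of $H$ reachable from $v$, the algorithm computes $\mis_{P_H}\cap R(v)$ in a bottom-up manner: when the DFS backtracks from a vertex $u$, include $u$ in the output set iff none of the out-neighbors of $u$ already determined to be in $\mis_{P_H}$ are in it. The answer to the query $v$ is whether $v$ was included.

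Next I would verify consistency. Because $H$ is acyclic, the reachability relation is a well-defined partial order $P_H$, and the DFS explores precisely the down-set of $v$ in the reversal of $P_H$. The set $\mis_{P_H}$ depends only on $P_H$, not on any linear extension, so the answer computed for a vertex $v$ depends only on the structure within $R(v)$ and is independent of the query sequence. In particular, the algorithm is stateless: each query triggers a fresh DFS from the queried vertex, and the membership of $v$ in $\mis_{P_H}$ is determined entirely by the orientation-induced subgraph reachable from $v$.

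Finally I would bound the probe radius. Every vertex $u$ visited by the DFS lies at the end of a directed path from $v$ in $H$, so $\dist_G(v,u)\le \rad(H)$. For each such $u$, the orientation subroutine $\ao$ probes only within the ball of radius $r$ around $u$, by assumption. By the triangle inequality, every probe made during the entire execution of the query on $v$ lies in $B^{G}_{r+\rad(H)}(v)$, giving the claimed probe radius. The main obstacle is essentially bookkeeping: confirming that confining the DFS to outgoing edges still produces the same \mis\ as any linear extension of $P_H$ would yield, which follows from the key observation stated earlier in the section that $\mis_\sigma=\mis_\tau$ for all linear extensions of $P_H$, together with the fact that in the greedy scan the decision for $v$ depends only on the predecessors of $v$ in the ordering --- precisely the vertices reachable from $v$ in $H$.
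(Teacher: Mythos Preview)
Your proposal is correct and follows essentially the same approach as the paper: a directed DFS along outgoing edges of $H$ to simulate the greedy \mis\ on the down-set of the queried vertex, with the probe-radius bound obtained via the triangle inequality combining $\rad(H)$ for the DFS reach and $r$ for each invocation of $\ao$. In fact your write-up is more explicit than the paper's, which merely describes the DFS procedure in the paragraph preceding the lemma and then states the lemma as a summary without spelling out the radius argument.
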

Let \lmis\ denote the \clocal\ \mis-algorithm in Lemma~\ref{lemma:simul}. Let
$\lmis(G,v)$ denote the Boolean predicate that indicates if $v$ is in the \mis\ of
$G$ computed by Algorithm \lmis.
\section{A  \clocall\ Approximate \mcmm\ Algorithm}\label{sec:clocal mcm}
In this section we present a stateless deterministic
\clocal\ algorithm that computes a $(1-\eps)$-approximation
of a maximum cardinality matching. The algorithm is based on a
\clocal-algorithm for maximal independent set (see
Lemma~\ref{lemma:simul}) and on the local improvement
technique of Nguyen and Onak~\cite{onak2008}.

\paragraph{Terminology and Notation}
Let $M$ be a matching in $G=(V,E)$.
A vertex $v \in V$ is \emph{$M$-free} if $v$ is not an endpoint of an edge in $M$.
A simple path is \emph{$M$-alternating} if it consists of edges drawn alternately from $M$ and from $E \setminus M$.
A path is \emph{$M$-augmenting} if it is $M$-alternating and if both of the path's endpoints are $M$-free vertices.
Note that the length of an augmenting path must be odd.
The set of edges in a path $p$ is denoted by $E(p)$, and the set of edges in a
collection $P$ of paths is denoted by $E(P)$.
Let $A\oplus B$ denote the symmetric difference of the sets $A$ and $B$.

\paragraph{Description of The Global Algorithm}
Similarly to~\cite{DBLP:conf/spaa/LotkerPP08,onak2008, shaiapprox2013} our local
algorithm simulates the global algorithm listed as Algorithm~\ref{alg:global mcm}. This
global algorithm builds on the next two lemmas of Hopcroft and
Karp~\cite{hopcroft1973n}, and Nguyen and Onak~\cite{onak2008}, respectively.

\begin{lemma}[\cite{hopcroft1973n}]
  Let $M$ be a matching in a graph $G$. Let $k$ denote the length of a shortest
  $M$-augmenting path. Let $P^*$ be a maximal set of vertex disjoint $M$-augmenting
  paths of length $k$.  Then, $(M\oplus E(P^*))$ is a matching and the length of every $(M\oplus E(P^*))$-augmenting path
  is at least $k+2$.
  \label{lemma:hopcroft-karp}
\end{lemma}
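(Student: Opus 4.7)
The plan is to handle the two assertions separately. For the matching claim, I would observe that the paths in $P^*$ are vertex-disjoint and each is $M$-alternating with $M$-free endpoints, so flipping the membership of edges along each path independently (which is exactly what forming $M\oplus E(P^*)$ does) turns $M$ into a matching with $|M|+|P^*|$ edges. The verification is essentially local at each vertex $v$: if $v$ is not on any path of $P^*$ its set of matched edges is unchanged; if $v$ is an interior vertex of some path $p_i\in P^*$ then exactly one $M$-edge at $v$ is removed and exactly one $E\setminus M$-edge at $v$ is added; if $v$ is an endpoint of $p_i$ it was $M$-free and becomes covered by a single new edge. Vertex-disjointness of the paths in $P^*$ ensures these cases do not interfere.

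For the length claim I would argue by contradiction. Suppose $M'=M\oplus E(P^*)$ admits an augmenting path $p$ of length at most $k$, and let $M''=M'\oplus E(p)$. Then $|M''|=|M|+|P^*|+1$, and since $M\oplus M''$ equals $E(P^*)\oplus E(p)$, this set has maximum degree $2$ and therefore decomposes into alternating paths and cycles with respect to $M$. The standard counting $|M''|-|M|=(\text{\# $M$-augmenting paths})-(\text{\# $M''$-augmenting paths})$ in the decomposition yields at least $|P^*|+1$ vertex-disjoint $M$-augmenting paths in $E(P^*)\oplus E(p)$.

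Each such path has length at least $k$ by minimality of $k$, giving
\[
(|P^*|+1)k \;\le\; |E(P^*)\oplus E(p)| \;\le\; |E(P^*)|+|E(p)| \;=\; |P^*|\,k+|p|,
\]
so $|p|\ge k$. To upgrade this to $|p|\ge k+2$, I would analyze the equality case $|p|=k$: every inequality above is tight, forcing $E(P^*)\cap E(p)=\emptyset$ and forcing $E(P^*)\oplus E(p)$ to consist of exactly $|P^*|+1$ vertex-disjoint $M$-augmenting paths, each of length exactly $k$. Such a family uses $(|P^*|+1)(k+1)$ vertices, but the vertices available lie in $V(P^*)\cup V(p)$, which has at most $|P^*|(k+1)+(k+1)=(|P^*|+1)(k+1)$ elements. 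Equality forces $V(P^*)\cap V(p)=\emptyset$, so $p$ is itself an $M$-augmenting path of length $k$ vertex-disjoint from $P^*$, contradicting the maximality of $P^*$.

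I expect the main obstacle to be the final step, where merely showing $|p|\ge k$ is easy but ruling out $|p|=k$ requires the careful equality-case vertex-counting argument above, which simultaneously leverages maximality of $P^*$ and the fact that each component in the decomposition must attain the minimum length $k$. The rest is bookkeeping about symmetric differences of matchings.
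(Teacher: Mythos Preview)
The paper does not give its own proof of this lemma; it is simply cited from Hopcroft and Karp~\cite{hopcroft1973n} and used as a black box. Your argument is correct and is essentially the classical Hopcroft--Karp proof: decompose $M\oplus M''=E(P^*)\oplus E(p)$ into alternating components, count $M$-augmenting components to get $|p|\ge k$, and in the equality case use the tight vertex count to force $V(P^*)\cap V(p)=\emptyset$, whence $p$ is itself an $M$-augmenting path of length $k$ disjoint from $P^*$, contradicting maximality. One cosmetic remark: the hypothesis ``$|p|\le k$'' in your contradiction setup is never used in deriving $|p|\ge k$, so the argument reads more cleanly if you take $p$ to be an arbitrary $M'$-augmenting path, prove $|p|\ge k$, exclude $|p|=k$, and then invoke the parity of augmenting-path lengths to conclude $|p|\ge k+2$.
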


\begin{lemma}[{\cite[Lemma~6]{onak2008}}]\label{lemma:apxmatch}
  Let $M^*$ be a maximum matching and $M$ be a matching in a graph $G$. Let $2k+1$
  denote the length of a shortest $M$-augmenting path.  Then
  \[
    |M| \geq \frac{k}{k+1}\cdot |M^*|\:.
  \]
\end{lemma}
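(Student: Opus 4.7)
\begin{proof sketch}{Lemma~\ref{lemma:apxmatch} (plan)}
The plan is to analyze the symmetric difference $M \oplus M^*$ and use the lower bound on augmenting-path length to bound how much $M^*$ can exceed $M$. The key combinatorial fact I will use is that in $M \oplus M^*$ every vertex has degree at most $2$, since it is incident to at most one edge of $M$ and at most one edge of $M^*$. Therefore the subgraph induced by $M \oplus M^*$ decomposes into a collection of vertex-disjoint simple paths and even simple cycles, each of which alternates between edges of $M$ and edges of $M^*$.

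First I would classify the components of this decomposition. Every even alternating cycle contains the same number of $M$-edges as $M^*$-edges, so cycles contribute $0$ to the difference $|M^*| - |M|$. Every alternating path contributes at most $+1$ to $|M^*| - |M|$, and it contributes exactly $+1$ precisely when both of its endpoints are $M$-free (the other alternating paths contribute $0$ or $-1$). Such a path is an $M$-augmenting path in $G$. Hence, letting $t$ denote the number of $M$-augmenting paths appearing as components of $M \oplus M^*$, we obtain
\[
|M^*| - |M| \;\leq\; t.
\]

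Next I would use the hypothesis that every $M$-augmenting path in $G$ has length at least $2k+1$. Each such path is alternating and starts and ends with an edge not in $M$ (since its endpoints are $M$-free), so among its $2k+1$ edges exactly $k$ belong to $M$ and $k+1$ belong to $M^* \setminus M$. Consequently each of the $t$ augmenting components uses at least $k$ edges of $M$. Because the components of $M \oplus M^*$ are vertex-disjoint, these $M$-edges from different components are pairwise distinct, giving
\[
tk \;\leq\; |M|, \qquad \text{hence} \qquad t \;\leq\; |M|/k.
\]

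Combining the two inequalities yields $|M^*| - |M| \leq |M|/k$, i.e.\ $k\,|M^*| \leq (k+1)\,|M|$, which rearranges to the desired bound $|M| \geq \frac{k}{k+1}\cdot |M^*|$. The proof is essentially routine once the symmetric-difference decomposition is set up; the only place where one must be slightly careful is in verifying the edge count $k$ (rather than $k+1$) inside each augmenting component, which follows from the fact that the path starts and ends outside $M$.
\end{proof sketch}
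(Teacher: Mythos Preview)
Your argument is correct and is the standard symmetric-difference proof of this fact. Note that the paper does not actually supply a proof of this lemma; it is quoted as \cite[Lemma~6]{onak2008} and used as a black box, so there is no ``paper's own proof'' to compare against. Your write-up matches the classical argument (going back to Hopcroft--Karp style analysis) that one finds in the cited source. The only minor omission is the degenerate case $k=0$, where the claimed inequality $|M|\geq 0$ is trivial and your division by $k$ is not needed.
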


\begin{algorithm}[H]\scriptsize
\caption{$\text{Global-APX-MCM}(G,\eps)$.}
\label{alg:global mcm}
\begin{algorithmic}[1]
\Require A graph $G=(V,E)$ and $0<\eps <1$.
\Ensure  A $(1-\eps)$-approximate matching
\State $M_0\gets \emptyset$.
\State $k\gets \lceil \frac{1}{\eps} \rceil$.
\For {$i=0$ to $k$}
  \State $P_{i+1} \gets \{ p \mid \text{$p$ is an
      $M_{i}$-augmenting path}, |p|=2i+1\}$.
  \State $P^*_{i+1}\subseteq P_{i+1}$ is a maximal vertex
      disjoint subset of paths.
  \State $M_{i+1}\triangleq M_{i} \oplus E(P^*_{i+1})$.
  \EndFor
\State \textbf{Return} $M_{k+1}$.
\end{algorithmic}
\end{algorithm}
%
%
\begin{algorithm}[H]\scriptsize
\caption{$\text{Global-APX-MCM'}(G,\eps)$.}
\label{alg:global mcm'}
\begin{algorithmic}[1]
\Require A graph $G=(V,E)$ and $0<\eps <1$.
\Ensure  A $(1-\eps)$-approximate matching
\State $M_0\gets \emptyset$.
\State $k\gets \lceil \frac{1}{\eps} \rceil$.
\For {$i=0$ to $k$}
  \State Construct the intersection graph $H_i$ over $P_i$.
  \State $P^*_{i+1} \gets \mis (H_i)$.
  \State $M_{i+1}\triangleq M_{i} \oplus E(P^*_{i+1})$.
  \EndFor
\State \textbf{Return} $M_{k+1}$.
\end{algorithmic}
\end{algorithm}

Algorithm~\ref{alg:global mcm} is given as input a graph $G$ and an approximation parameter
$\eps\in (0,1)$. The algorithm works in $k$
iterations, where $k=\lceil \frac{1}{\eps}\rceil$.  Initially, $M_0=\emptyset$. The
invariant of the algorithm is that $M_i$ is a matching, every augmenting path of
which has length at least $2i+1$.  Given $M_{i}$, a new matching $M_{i+1}$ is
computed as follows.  Let $P_{i+1}$ denote the set of shortest $M_{i}$-augmenting
paths. Let $P^*_{i+1} \subseteq P_{i+1}$ denote a maximal subset of vertex disjoint
paths.  Define $M_{i+1}\triangleq M_{i} \oplus E(P^*_{i+1})$.  By
Lemmas~\ref{lemma:hopcroft-karp} and~\ref{lemma:apxmatch}, we obtain the following
result.

\begin{theorem}
  The matching $M_{k+1}$ computed by Algorithm~\ref{alg:global mcm} is a
  $(1-\eps)$-approximation of a maximum matching.
\end{theorem}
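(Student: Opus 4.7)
The plan is to establish the approximation guarantee by combining the two preceding lemmas in a straightforward inductive argument over the main loop of Algorithm~\ref{alg:global mcm}.

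First I would prove the following invariant by induction on $i$: for every $0 \le i \le k+1$, the set $M_i$ is a matching, and every $M_i$-augmenting path has length at least $2i+1$. The base case $i=0$ is immediate because $M_0 = \emptyset$ is a matching and every augmenting path (being a path) has length at least $1$. For the inductive step, assume the invariant holds for $M_i$. Then by definition of $P_{i+1}$ in the algorithm, every path in $P_{i+1}$ has length $2i+1$, which by the inductive hypothesis is precisely the length of a shortest $M_i$-augmenting path. Since $P^*_{i+1} \subseteq P_{i+1}$ is a maximal vertex-disjoint subfamily of shortest $M_i$-augmenting paths, Lemma~\ref{lemma:hopcroft-karp} applies and tells us that $M_{i+1} = M_i \oplus E(P^*_{i+1})$ is a matching whose shortest augmenting path has length at least $(2i+1)+2 = 2(i+1)+1$, which completes the induction.

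Next I would apply the invariant at $i = k+1$. The matching $M_{k+1}$ output by the algorithm has the property that every $M_{k+1}$-augmenting path has length at least $2(k+1)+1$. By Lemma~\ref{lemma:apxmatch} (taking the parameter to be at least $k+1$), we conclude
\[
|M_{k+1}| \;\ge\; \frac{k+1}{k+2}\cdot |M^*| \;=\; \left(1 - \frac{1}{k+2}\right)|M^*|.
\]
Finally I would plug in $k = \lceil 1/\eps \rceil \ge 1/\eps$, which gives $k+2 \ge 1/\eps + 2 > 1/\eps$, so $1/(k+2) < \eps$, and therefore $|M_{k+1}| \ge (1-\eps)|M^*|$, as required.

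There is no real obstacle here: the two cited lemmas do all of the combinatorial work, and the only care needed is to verify that the paths collected into $P_{i+1}$ really are shortest $M_i$-augmenting paths (so that Lemma~\ref{lemma:hopcroft-karp} is applicable), which is exactly what the induction hypothesis guarantees. The only place one might slip is in the arithmetic relating $k$, $\eps$ and the factor $k/(k+1)$ versus $(k+1)/(k+2)$ from applying the second lemma after the $(k+1)$-st iteration rather than the $k$-th; the choice $k = \lceil 1/\eps \rceil$ leaves enough slack either way.
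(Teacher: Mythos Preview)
Your proof is correct and follows exactly the approach the paper intends: the paper explicitly states the same invariant (``$M_i$ is a matching, every augmenting path of which has length at least $2i+1$'') and simply asserts that the theorem follows from Lemmas~\ref{lemma:hopcroft-karp} and~\ref{lemma:apxmatch}, which is precisely the argument you spell out in detail. One tiny wording issue: the inductive hypothesis only gives that the shortest $M_i$-augmenting path has length \emph{at least} $2i+1$, not exactly $2i+1$, so when no path of length $2i+1$ exists $P_{i+1}$ is empty and $M_{i+1}=M_i$; the invariant then carries over trivially (augmenting paths have odd length), so no harm is done.
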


\paragraph{The intersection graph}
Define the intersection graph $H_i=(P_i,C_i)$ as follows.  The set of nodes $P_i$ is
the set of $M_{i-1}$-augmenting paths of length $2i-1$.  We connect two paths $p$ and
$q$ in $P_i$ by an edge $(p,q) \in C_i$ if $p$ and $q$ intersect (i.e., share a
vertex in $V$).  Note that $H_1$ is the line graph of $G$ and that $M_1$ is simply a
maximal matching in $G$.  Observe that $P^*_i$ as defined above is a maximal
independent set in $H_i$.  Thus, iteration $i$ of the global algorithm can be
conceptualized by the following steps (see Algorithm~\ref{alg:global mcm'}): construct the intersection graph $H_i$,
compute a maximal independent set $P^*_i$ in $H_i$, and augment the matching by
$M_{i}\triangleq M_{i-1} \oplus(E(P^*_i))$.

%

\paragraph{Implementation by a stateless deterministic \clocal\ Algorithm}
The recursive local improvement technique in~\cite[Section
3.3]{onak2008} simulates the global algorithm. It is based
on a recursive oracle $\oracle_i$.  The input to oracle
$\oracle_i$ is an edge $e \in E$, and the output is a bit
that indicates whether $e \in M_i$. Oracle $\oracle_i$
proceeds by computing two bits $\tau$ and $\rho$ (see
Algorithm~\ref{alg:Oi}).  The bit $\tau$ indicates whether
$e\in M_{i-1}$, and is computed by invoking oracle
$\oracle_{i-1}$.  The bit $\rho$ indicates whether $e\in
E(P^*_i)$ (where $P^*_i$ is an \mis\ in $H_{i-1}$).  Oracle
$\oracle_i$ returns $\tau \oplus \rho$ because
$M_i=M_{i-1}\oplus E(P^*_i)$.

We determine whether $e\in E(P^*_i)$ by running the \clocal-algorithm $\proca_i$ over
$H_i$ (see Algorithm~\ref{alg:Ai}).  Note that $\proca_1$ simply computes a maximal
matching (i.e., a maximal independent set of the line graph $H_1$ of $G$).  The main
difficulty we need to address is how to simulate the construction of $H_i$ and probes
to vertices in $H_i$. We answer the question whether $e \in E(P_i^*)$ by executing
the following steps: (1)~Listing: construct the set $P_i(e)\triangleq \{p\in P_i \mid
e\in E(p)\}$. Note that $e\in E(P^*_i)$ if and only if $P_i(e)\cap P^*_i\neq
\emptyset$. (2)~\mis-step: for each $p\in P_i(e)$, input the query $p$ to an
\mis-algorithm for $H_i$ to test whether $p\in P^*_i$. If an affirmative answer is
given to one of these queries, then we conclude that $e\in E(P_i^*)$. We now
elaborate on how the listing step and the \mis-step are carried out by a
\clocal-algorithm.

The listing of all the paths in $P_i(e)$ uses two preprocessing steps: (1)~Find the
balls of radius $2i-1$ in $G$ centered at the endpoints of $e$. (2)~Check if $e'\in
M_{i-1}$ for each edge $e'$ incident to vertices in the balls. We can then
exhaustively check for each path $p$ of length $2i-1$ that contains $e$ whether $p$
is an $M_{i-1}$-augmenting path.

The \mis-step answers a query $p\in P^*_i$ by simulating
the \mis\ \clocal-algorithm over $H_i$. The \mis-algorithm
probes $H_i$. A probe to $H_i$ consists of an
$M_{i-1}$-augmenting path $q$ and a port number. We suggest
to implement this probe by probing all the neighbors of $q$
in $H_i$ (so the port number does not influence the first
part of implementing a probe). See Algorithm~\ref{alg:prob
Hi}. As in the listing step, a probe $q$ in $H_i$ can be
obtained by (1)~finding the balls in $G$ of radius $2i-1$
centered at endpoints of edges in $E(q)$, and (2)~finding
out which edges within these balls are in $M_{i-1}$. The
first two steps enable us to list all of the neighbors of
$q$ in $H_i$ (i.e., the $M_{i-1}$-augmenting paths that
intersect $q$). These neighbors are ordered (e.g., by
lexicographic order of the node IDs along the path). If the
probe asks for the neighbor of $q$ in port $i$, then the
implementation of the probe returns the $i$th neighbor of
$q$ in the ordering.

By combining the recursive local improvement technique with
our deterministic stateless \clocal\ \mis-algorithm, we
obtain a deterministic stateless \clocal-algorithm that
computes a $(1-\eps)$-approximation for maximum matching.
The algorithm is invoked by calling the oracle
$\oracle_{k+1}$. The next lemma can be proved by induction.
\begin{lemma}
  The oracle $\oracle_{i}(e)$ computes whether $e\in M_{i}$.
\end{lemma}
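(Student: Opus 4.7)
The plan is to proceed by induction on $i$. For the base case $i=0$, we have $M_0 = \emptyset$ by definition, and the oracle $\oracle_0(e)$ simply returns $0$ for every edge $e$, which is correct.

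For the inductive step, I would assume that $\oracle_{i-1}(e)$ correctly decides whether $e \in M_{i-1}$ for every edge $e$. Since $M_i = M_{i-1} \oplus E(P_i^*)$, the identity
\[
    [e \in M_i] \;=\; [e \in M_{i-1}] \;\oplus\; [e \in E(P_i^*)]
\]
reduces the task to showing that $\oracle_i$ computes each of the two bits on the right-hand side correctly. The bit $\tau = [e \in M_{i-1}]$ is computed by a recursive call to $\oracle_{i-1}$, which is correct by the inductive hypothesis. So the heart of the proof is to show that $\rho = [e \in E(P_i^*)]$ is computed correctly by the combination of the listing step and the \mis-step.

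For the listing step, I would argue that every $M_{i-1}$-augmenting path of length $2i-1$ that contains $e$ is supported on vertices within the ball of radius $2i-1$ around the endpoints of $e$. Hence, by exhaustively enumerating all simple paths of length $2i-1$ that pass through $e$ within these balls, and testing $M_{i-1}$-membership of each edge via $\oracle_{i-1}$ (correct by the inductive hypothesis), the algorithm produces exactly the set $P_i(e) = \{p \in P_i : e \in E(p)\}$. In particular, $e \in E(P_i^*)$ if and only if $P_i(e) \cap P_i^* \neq \emptyset$, so it suffices to test membership of each $p \in P_i(e)$ in the maximal independent set $P_i^*$ of $H_i$.

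The \mis-step runs the stateless deterministic \clocal-algorithm \lmis\ on $H_i$. For this to give the right answer, every probe that \lmis\ issues against $H_i$ must be answered consistently with the actual intersection graph. A probe on a vertex $q \in V(H_i)$ asks for its neighbors, i.e., for the $M_{i-1}$-augmenting paths of length $2i-1$ that intersect $q$. By the same argument as in the listing step, any such neighbor lies within balls of radius $2i-1$ around the vertices of $q$, so Algorithm~\ref{alg:prob Hi} enumerates all of them (again using $\oracle_{i-1}$ to determine $M_{i-1}$-membership, which is correct by induction). Fixing a lexicographic ordering of these neighbors based on node IDs makes the port assignment well-defined and deterministic, so repeated probes are answered consistently across queries. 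Since \lmis\ is itself stateless, deterministic, and consistent (Lemma~\ref{lemma:simul}), its answer $[p \in P_i^*]$ depends only on $H_i$ and not on the query history, and therefore correctly identifies the \mis\ $P_i^*$ used by the global algorithm. Combining $\tau$ and $\rho$ yields $[e \in M_i]$, completing the induction.

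The main obstacle I expect is justifying the consistency of the simulation of $H_i$: the listing step and the probe-answering routine must produce answers matching the same intersection graph and the same (canonical) port numbering regardless of the order in which edges and paths are queried. This is handled by the canonical lexicographic ordering and the statelessness of $\oracle_{i-1}$ and \lmis, but care is required because the ``graph'' $H_i$ is never materialized.
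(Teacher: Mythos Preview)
Your proposal is correct and follows exactly the approach the paper indicates: the paper states only that ``the next lemma can be proved by induction'' and gives no further details, so your argument is precisely the routine induction the authors had in mind, fleshed out with the bookkeeping about the listing step, the probe simulation for $H_i$, and the consistency of $\lmis$.
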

%
\begin{algorithm}[H]\scriptsize
\caption{$\oracle_{i}(e)$ - a recursive oracle for
membership in the approximate matching.} \label{alg:Oi}
\begin{algorithmic}[1]
\Require A query $e \in E$.
\Ensure Is $e$ an edge in the matching $M_i$?
  \State If $i=0$ then return false.
  \State $\tau\gets \oracle_{i-1}(e)$.
  \State $\rho\gets \proca_{i}(e)$.
  \State \textbf{Return} $\tau \oplus \rho$.
\end{algorithmic}
\end{algorithm}
%
\begin{algorithm}[H]\scriptsize
\caption{$\proca_i(e=(u,v))$ - a procedure for checking
membership of an edge $e$ in one
  of the paths in $P^*_i$.} \label{alg:Ai}
\begin{algorithmic}[1]
\Require An edge $e \in E$.
\Ensure Does $e$ belong to a path $p\in P^*_i$?
\State \textbf{Listing step:} \Comment Compute all shortest $M_{i-1}$-augmenting
paths that contain $e$.
  \State \quad $B_u \gets BFS_G(u)$ with depth $2i-1$.
  \State \quad $B_v \gets BFS_G(v)$ with depth $2i-1$.
  \State \quad For every edge $e'$ in the subgraph of $G$ induced by $B_u\cup B_v$: $\chi_{e'}\gets \oracle_{i-1}(e')$.
  \State \quad $P_i(e)\gets$ all $M_{i-1}$-augmenting paths of length $2i-1$ that contain $e$.
\State \textbf{\mis-step:} \Comment Check if one of the augmenting paths is in $P^*_i$.
  \State \quad For every $p\in P_i(e)$: If $\lmis(H_i,p)$ \textbf{Return} true.
  \State \quad \textbf{Return} false.
\end{algorithmic}
\end{algorithm}
%
\begin{algorithm}[H]\scriptsize
  \caption{$\textit{probe}(i,p)$ - simulation of a probe to the intersection graph
    $H_i$ via probes to $G$.} \label{alg:prob Hi}
\begin{algorithmic}[1]
\Require A path $p \in P_i$ and the ability to probe $G$.
\Ensure The set of $M_{i-1}$-augmenting paths of length $2i-1$ that intersect $p$.
\State For every $v\in p$ do
  \State \quad $B_v \gets BFS_G(v)$ with depth $2i-1$.
  \State \quad For every edge $e'\in B_v$: $\chi_e\gets \oracle_{i-1}(e)$. \Comment determine whether the path is alternating and whether the endpoints are $M_{i-1}$-free.
  \State \quad $P_i(v)\gets$ all $M_{i-1}$-augmenting paths of length $2i-1$ that contain $v$.
\item \textbf{Return} $\bigcup_{v\in p} P_i(v)$. 
\end{algorithmic}
\end{algorithm}

\section{A  \dlocall\ Approximate \mcmm\ Algorithm}\label{sec:dlocal mcm}
In this section, we present a \dlocal-algorithm that
computes a $(1-\eps)$-approximate maximum cardinality matching.  The
algorithm is based on collecting information from balls and
then simulating the \clocal\ algorithm presented in
Section~\ref{sec:clocal mcm}.
\begin{theorem}\label{thm:distalg}
  There is a deterministic $\dlocal[\Delta^{O(1/\eps)} +
  O\left(\frac{1}{\eps^2}\right) \cdot\log^*(n)]$-algorithm for computing a
  $(1-\eps)$-approximate \mum.
\end{theorem}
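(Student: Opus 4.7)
The plan is to invoke Proposition~\ref{prop:simul} on the stateless deterministic \clocal-algorithm for $(1-\eps)$-approximate \mcm\ constructed in Section~\ref{sec:clocal mcm}. This reduces the claim to bounding the probe radius of its top-level oracle $\oracle_{k+1}$, where $k=\lceil 1/\eps\rceil$. I would set up a recurrence on $R_i$, the probe radius of $\oracle_i$ in $G$. Since $\oracle_0\equiv\text{false}$, $R_0=0$. From Algorithm~\ref{alg:Oi}, $\oracle_i(e)$ issues one recursive call $\oracle_{i-1}(e)$ plus one call to $\proca_i(e)$. Inspecting Algorithm~\ref{alg:Ai}, the latter performs a BFS of radius $2i-1$ in $G$ from the endpoints of $e$, invokes $\oracle_{i-1}$ on every edge in the BFS, and then runs $\lmis(H_i,p)$ for each $p\in P_i(e)$.

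The heart of the analysis is bounding the \mis-step. The degree of the intersection graph satisfies $\Delta(H_i)\le 2i\cdot\Delta^{2i-1}=\Delta^{O(i)}$, since each vertex on a length-$(2i-1)$ path belongs to at most that many augmenting paths. By Section~\ref{sec:obr} there is a stateless \clocal-algorithm $\ao$ that produces an acyclic orientation of $H_i$ of radius $\poly(\Delta(H_i))=\Delta^{O(i)}$, with probe radius (in $H_i$) at most $O(\log^*n(H_i))+\poly(\Delta(H_i))$. Since $n(H_i)\le n\cdot\Delta^{O(i)}$, we have $\log^*n(H_i)=O(\log^*n)$, so by Lemma~\ref{lemma:simul} the probe radius of \lmis\ on $H_i$ is at most $d_i:=O(\log^*n)+\Delta^{O(i)}$. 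A single probe to $H_i$ (Algorithm~\ref{alg:prob Hi}) expands a $G$-ball of radius $2i-1$ around each vertex of the probed path and then invokes $\oracle_{i-1}$ on the edges inside, so $d$ consecutive hops in $H_i$ correspond to probes in $G$ at distance $O(i\cdot d)$ from $e$. Aggregating the extra $\oracle_{i-1}$ calls separately yields the recurrence
\[
R_i \;\le\; R_{i-1} + O(i)\cdot d_i \;=\; R_{i-1} + O(i)\cdot\log^*n + \Delta^{O(i)}.
\]

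Unrolling from $i=0$ to $i=k+1=O(1/\eps)$ gives
\[
R_{k+1} \;=\; \sum_{i=1}^{k+1}\bigl(O(i)\cdot\log^*n + \Delta^{O(i)}\bigr) \;=\; O(1/\eps^2)\cdot\log^*n + \Delta^{O(1/\eps)},
\]
which matches the claimed round complexity once Proposition~\ref{prop:simul} is applied. The main obstacle in executing this plan is the bookkeeping needed to translate probe radii in the auxiliary graphs $H_i$ back into probe radii in $G$: one must verify carefully both that $\Delta(H_i)=\Delta^{O(i)}$ and that a single hop in $H_i$ corresponds to only $O(i)$ hops in $G$, so that the $\poly(\Delta(H_i))$ factor from the orientation radius contributes only $\Delta^{O(i)}$ (rather than, say, $\Delta^{\Omega(i^2)}$) at each recursion level. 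Once this is verified, the factor-$O(i)$ blow-up at level $i$ summed over $i\le O(1/\eps)$ accounts for the $O(1/\eps^2)\cdot\log^*n$ term, while the geometric sum of the $\Delta^{O(i)}$ terms is dominated by its last term.
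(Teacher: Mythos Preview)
Your proposal is correct and follows essentially the same approach as the paper: reduce to Proposition~\ref{prop:simul}, then bound the probe radius of $\oracle_{k+1}$ via a recurrence that tracks the cost of $\lmis$ on $H_i$ (using $\Delta(H_i)=\Delta^{O(i)}$ and $\log^* n(H_i)=O(\log^* n)$) together with the $O(i)$ blow-up when translating probes in $H_i$ back to $G$, and sum over $i\le O(1/\eps)$. The paper's Lemma~\ref{lemma:radius} carries out exactly this computation; the only cosmetic difference is that the paper records $\Delta(H_i)\le (2i)^2\Delta^{2i-1}$ (accounting for all $2i$ vertices of a path, each lying on at most $2i\cdot\Delta^{2i-1}$ paths) rather than your $2i\cdot\Delta^{2i-1}$, but this does not affect the $\Delta^{O(i)}$ asymptotics.
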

\begin{proof}
  The proof of the theorem is based on the simulation of a \clocal-algorithm by a
  \dlocal-algorithm, as summarized in Section~\ref{sec:simu}.  In
  Lemma~\ref{lemma:radius} we prove that the probes are restricted to a ball of radius
  $\Delta^{O(1/\eps)} + O\left(\frac{1}{\eps^2}\right) \cdot\log^*(n)$, and the
  theorem follows.
\end{proof}
\begin{lemma}\label{lemma:radius}
The probe radius of the \clocal-algorithm $\oracle_{1+\lceil 1/\eps \rceil}$ is
  \[
    r = \Delta^{O(1/\eps)}+O\left(\frac{1}{\eps^2}\right)\cdot \log^*(n)\:.
  \]
\end{lemma}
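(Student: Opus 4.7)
The plan is to establish a recursion on $r_i$, the probe radius in $G$ of $\oracle_i$, and then solve it. From Algorithm~\ref{alg:Oi}, $\oracle_i$ invokes $\oracle_{i-1}$ on $e$ (contributing $r_{i-1}$) and $\proca_i(e)$. So I would write
\[
    r_i \leq r_{i-1} + r_i^{\proca},
\]
and reduce the task to bounding $r_i^{\proca}$. The listing step of $\proca_i$ does BFS of depth $2i-1$ from the endpoints of $e$ and queries $\oracle_{i-1}$ on every edge in the resulting balls; these probes lie within distance $(2i-1)+r_{i-1}$ of $e$. The dominant cost, however, comes from the \mis-step, which runs $\lmis(H_i, p)$ for each $p \in P_i(e)$, and this requires translating probes in $H_i$ into probes in $G$.

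Next, I would quantify the cost of a single $H_i$-probe in terms of $G$-radius. Two paths adjacent in $H_i$ share a vertex in $G$, and each path has length $2i-1$, so an $H_i$-distance of $d$ from the query path $p$ yields a $G$-distance at most $d(2i-1)$ from any vertex of $p$; because $e \in p$, this is at most $(d+1)(2i-1)$ from $e$. To simulate the probe by Algorithm~\ref{alg:prob Hi} we additionally perform BFS of depth $2i-1$ around each vertex of the reached path and call $\oracle_{i-1}$ on its edges, adding $(2i-1)+r_{i-1}$. Hence, if $\rho_i$ denotes the probe radius of \lmis\ on $H_i$, then
\[
    r_i^{\proca} \leq (2i-1)(\rho_i+2) + r_{i-1},
\]
which gives $r_i \leq (2i-1)(\rho_i+2) + 2 r_{i-1}$.

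The bound on $\rho_i$ follows from Lemma~\ref{lemma:simul} and the coloring-based orientation of Section~\ref{sec:obr}: $\rho_i \leq r_{ao}(H_i) + \rad(H_i)$, where the coloring uses $\poly(\Delta(H_i))$ colors in $O(\log^*(n(H_i)) + \poly(\Delta(H_i)))$ probes. Two paths in $P_i$ intersect only if they share one of the $2i$ vertices, and each vertex of $G$ lies on at most $\Delta^{2i-1}$ paths of length $2i-1$; similarly $|V(H_i)| \leq n \cdot \Delta^{O(i)}$. Thus $\Delta(H_i) = \Delta^{O(i)}$ and
\[
    \rho_i = O(\log^{*}(n) + \Delta^{O(i)}).
\]

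Finally, I would unroll the recursion for $i = 1, \dots, k+1$ with $k = \lceil 1/\eps\rceil$. The $\Delta^{O(i)}$ contribution sums to $\Delta^{O(1/\eps)}$, since the largest term dominates the geometric sum; the $\log^*(n)$ contribution is $\sum_{i=1}^{k+1}(2i-1) \cdot O(\log^*(n)) = O(k^2)\log^*(n) = O(1/\eps^2)\log^*(n)$. Combining yields $r_{k+1} = \Delta^{O(1/\eps)} + O(1/\eps^2)\cdot \log^*(n)$, as claimed. The main technical obstacle is the second step: making precise that one step in $H_i$ costs $O(i)$ in $G$ and accounting for the nested $\oracle_{i-1}$-calls embedded inside every $H_i$-probe simulation, so that the recursion remains clean and the factors of $i$ only accumulate polynomially rather than multiplicatively with the $\Delta^{O(i)}$ term.
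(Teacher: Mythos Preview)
Your approach is essentially the paper's: set up a recurrence for $r_G(\oracle_i)$, bound the probe radius of $\lmis$ on $H_i$ via the coloring-based orientation, translate $H_i$-distance to $G$-distance at a cost of $O(i)$ per step, and sum. The bounds $\Delta(H_i)=\Delta^{O(i)}$, $n(H_i)\le n^{O(i)}$, and $\rho_i=O(\log^*(n)+\Delta^{O(i)})$ are exactly what the paper uses.

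There is one bookkeeping slip that, as written, breaks the $O(1/\eps^2)\log^*(n)$ bound. You write $r_i \le r_{i-1}+r_i^{\proca}$, but $\oracle_i$ calls both $\oracle_{i-1}$ and $\proca_i$ on the \emph{same} edge $e$, so the correct relation is $r_i \le \max\{r_{i-1},\,r_i^{\proca}\}$. Since your own bound $r_i^{\proca}\le (2i-1)(\rho_i+2)+r_{i-1}$ already contains an $r_{i-1}$ term (from the $\oracle_{i-1}$ calls inside the listing step and inside $\textit{probe}(i,\cdot)$), the max collapses to $r_i \le r_{i-1} + O(i\,\rho_i)$, which is the additive recursion you in fact sum in your last paragraph. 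With your stated recursion $r_i \le 2r_{i-1}+O(i\,\rho_i)$, unrolling would pick up factors of $2^{k-i}$ and yield $2^{O(1/\eps)}\log^*(n)$ rather than $O(1/\eps^2)\log^*(n)$; your final summation silently drops this factor. Replace the sum by a max in the first line (as the paper does) and the rest of your argument goes through unchanged.
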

\begin{proof}
  Consider a graph $G'$ and a \clocal-algorithm $A$ that probes $G'$.  Let
  $r_{G'}(A)$ denote the probe radius of algorithm $A$ with respect to the graph
  $G'$.

  The description of the oracle $\oracle_i$ implies that the probe radius $r_G(\oracle_i)$ satisfies
  the following recurrence:
  \begin{align*}
    r_G(\oracle_i)&=
    \begin{cases}
      0 & \text{if $i=0$},\\
      r_G(\proca_1) & \text{if $i=1$},\\
      \max\{r_G(\oracle_{i-1}), r_G(\proca_i)\}& \text{if $i\geq 2$.}
\end{cases}
\end{align*}

The description of the procedure $\proca_i$ implies that the probe radius $r_G(\proca_i)$ satisfies
the following recurrence:
  \begin{align*}
    r_{G}(\proca_i)&\leq
    \max\{2i+r_G(\oracle_{i-1}), 2i-1+r_G(\lmis(H_i))\}
\end{align*}

We bound the probe radius $r_G(\lmis(H_i))$ by composing the
radius $r_{H_i}(\lmis(H_i))$ with the increase in radius
incurred by the simulation of probes to $H_i$ by probes to
$G$.  Recall that the $\lmis$-algorithm is based on a
deterministic coloring algorithm $C$.  We denote the number
of colors used by $C$ to color a graph $G'$ by $|C(G')|$.

The \mis-algorithm orients the edges by coloring the vertices. The
radius of the orientation is at most the number of colors. It follows that
\begin{align*}
  r_{H_i}(\lmis(H_i)) &\leq r_{H_i}(C(H_i)) + |C(H_i)|.
\end{align*}
The simulation of probes to $H_i$ requires an increase in
the radius by a factor of $2i-1$ in addition to the radius
of the probes. Hence,
\begin{align*}
  r_{G}(\lmis(H_i)) &\leq (2i-1) \cdot   r_{H_i}(\lmis(H_i)) + r_G(\textit{probe}(i,p)).
\end{align*}

Many distributed coloring algorithms find a vertex coloring in $O(\log^*(n) +
\poly(\Delta))$ rounds (giving us the same upper bound on the probe-radius of the
corresponding \clocal-algorithm) and use $\poly(\Delta)$ colors~(see,
for example,
\cite{barenboim2009distributed,linial,cole1986deterministic,panconesi2001some,kuhn2009weak}).
Plugging these parameters in the recurrences yields
\begin{align*}
  r_G(\oracle_i) &\leq 2i +  r_G(\lmis(H_i))\\
&\leq 2i\cdot (1+  r_{H_i}(\lmis(H_i))) + r_G(\textit{probe}(i,p))\\
&\leq r_G(\oracle_{i-1}) + O\Big(i\cdot \log^* (n(H_i)) + \poly(\Delta(H_i))\Big),
\end{align*}
Since $\Delta(H_i) \leq (2i)^2 \Delta^{2i-1}$ and $n(H_i) \leq n^{2i}$ we get that
  \begin{align*}
    r_G(\oracle_k) & \leq \sum_{i=1}^{k} O\left(i\cdot \log^*(n) + \poly((2i)^2\cdot \Delta^{2i})\right) \\
    & = O(k^2\cdot \log^*(n)) + \Delta^{O(k)}.
  \end{align*}
  Let $k=1+\lceil \frac {1}{\eps}\rceil$, and the lemma follows.
\end{proof}

\section{A  \clocall\ Approximate \mwmm\ Algorithm}\label{sec:clocal mwm}
In this section we present a deterministic stateless \clocal-algorithm that computes
a $(1-\eps)$-approximation of a maximum weighted matching.  The algorithm is based on
the sublinear approximation algorithm for weighted
matching~\cite{onakthesis,onak2008}; we replace the randomized \mis-algorithm by our
deterministic \mis-algorithm.  The pseudo-code is listed in the
Appendix.

\paragraph{Terminology and Notation}
In addition to the terminology and notation used in the unweighted case, we define
the following terms. For a matching $M$ and an alternating path $p$, the \emph{gain}
of $p$ is defined by
\[
    g_M(p) \eqdf w(p\setminus M) - w(p \cap M)\:.
\]
We say that an $M$-alternating path $p$ is \emph{$M$-augmenting} if: (1)~$p$ is a
simple path or a simple cycle, (2)~$M \oplus E(p)$ is a matching, and (3)~$g_M(p) >
0$. We say that a path $p$ is \emph{$(M,[1,k])$-augmenting} if $p$ is $M$-augmenting
and $|E(p) \setminus M| \leq k$. Note that an $(M,[1,k])$-augmenting path may contain
at most $2k+1$ edges.

\paragraph{Preprocessing and Discretization of Weights}
We assume that the edge weights are positive as nonpositive weight edges do not
contribute to the weight of the matching. We also assume that the maximum edge weight
is known to all the vertices. By normalizing the weights, we obtain that the edge
weights are in the interval $(0,1]$. Note, that at least one edge has weight $1$.  As
we are interested in a $(1-\eps)$-approximation, we preprocess the edge weights by
ignoring lightweight edges and rounding down weights as follows: (1)~An edge $e$ is
\emph{lightweight} if $w(e)< \eps/n$. The contribution of the lightweight edges to
any matching is at most $\eps/2$. It follows that ignoring lightweight edges decreases the
approximation ratio by at most a factor of $(1-\eps/2)$. (2)~We round down the edge
weights to the nearest integer power of $(1-\eps/2)$. Let $w(e)$ denote the original
edge weights and let $w'(e)$ denote the rounded down weights. Therefore,
$w(e)\cdot (1-\eps/2) < w'(e)\leq w(e)$. It follows that, for every matching $M$, we
have $w'(M)\geq (1-\eps/2)\cdot w(M)$. The combined effect of ignoring lightweight
edges and discretization of edge weights decreases the approximation factor by at
most a factor of $(1-\eps)$. We therefore assume, without loss of generality, that
the edge weights $w(e)$ are integer powers of $(1-\eps/2)$ in the interval $[\eps/n,1]$.
Let
$$\wmin(\eps)\triangleq\max\{\eps/n,\min_e w(e)\}.$$
In particular, if $\wmin(\eps)<1$ (i.e., the weighted case), then there are at most
\[
W\triangleq
\Theta\left(\frac{1}{\eps}\cdot \log \left(\frac{1}{\wmin(\eps )}\right)\right)
\]
distinct weights.
This implies that the set of all possible gains achievable by $(M,[1,k])$-augmenting paths
is bounded by $T_k\triangleq\Theta(W^{2k+1})$. We denote the set of $T_k$ possible
gains by $\{g_1,\ldots, g_{T_k}\}$, where $g_i> g_{i+1}$.

\paragraph{Description of The Global Algorithm}
The starting point is the global algorithm of Pettie and
Sanders~\cite{DBLP:journals/ipl/PettieS04} for
approximating an \mwm. Onak~\cite{onakthesis} suggested an
implementation of this algorithm that is amenable to
localization (See Algorithm~\ref{alg:global mwm}
and~\ref{alg:global mwm'} in the Appendix).  The main
difference between the algorithms for the weighted case and
the unweighted case is that the maximum length of the
augmenting paths (and cycles) does not grow; instead,
during every step, $(M,[1,k])$-augmenting paths are used.
In~\cite{DBLP:journals/ipl/PettieS04}, a maximal  set of
augmenting paths is computed by greedily adding augmenting
paths in decreasing gain order. Discretization of edge
weights enables one to simulate this greedy procedure by
listing the augmenting paths in decreasing gain
order~\cite{onakthesis}.

\paragraph{Algorithm Notation} The global algorithm uses the following notation.  The
algorithm computes a sequence of matchings $M_{i,j}$ that are doubly indexed (where
$i\in[1,L]$ and $j\in [1,T_k]$).  We denote the initial empty matching by $M_{1,0}$.
These matchings are ordered in the lexicographic ordering of their indexes, and
$M_{\prev(i,j)}$ denotes the predecessor of $M_{i,j}$.  Let $P_{i,j}$ denote the set
of $(M_{\prev(i,j)},[1,k])$-augmenting paths whose gain is $g_j$. Let $H_{i,j}$ denote
the intersection graph over $P_{i,j}$ with edges between paths whenever the paths
share a vertex. Let $G_k$ denote the intersection graph over all paths of length at
most $2k+1$ in $G$. Note that each $H_{i,j}$ is the subgraph of $G_k$ induced by
$P_{i,j}$.  Hence, a vertex coloring of $G_k$ is also a vertex coloring of
$H_{i,j}$. Let $P^*_{i,j}$ denote a maximal independent set in $H_{i,j}$.

\paragraph{Implementation by a Stateless Deterministic \clocal\ Algorithm}
The \clocal\ implementation of the global algorithm is an
adaptation of the \clocal-algorithm from
Section~\ref{sec:clocal mcm}.  The oracle $\oracle_{i,j}$
is doubly indexed and so is the procedure $\proca_{i,j}$.
The algorithm is invoked by calling the oracle
$\oracle_{L,T_k}$. The next lemma can be proved by
induction.
\begin{lemma}
  The oracle $\oracle_{i,j}(e)$ computes whether $e\in M_{i,j}$.
\end{lemma}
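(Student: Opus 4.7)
The plan is to prove the lemma by induction on the lexicographic order of the double index $(i,j)$, mirroring the structure used for the unweighted case (Lemma on $\oracle_i$ in Section~\ref{sec:clocal mcm}). The base case is $M_{1,0}=\emptyset$: the oracle simply returns false, as in Algorithm~\ref{alg:Oi}. This is consistent with the fact that no edge belongs to the empty matching.

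For the inductive step, fix $(i,j)$ and assume that $\oracle_{\prev(i,j)}$ correctly decides membership in $M_{\prev(i,j)}$. The oracle $\oracle_{i,j}(e)$ computes $\tau\oplus\rho$, where $\tau=\oracle_{\prev(i,j)}(e)$ tests whether $e\in M_{\prev(i,j)}$, and $\rho=\proca_{i,j}(e)$ tests whether $e\in E(P^*_{i,j})$. Since the global algorithm defines
\[
M_{i,j} = M_{\prev(i,j)} \oplus E(P^*_{i,j}),
\]
an edge $e$ belongs to $M_{i,j}$ if and only if it belongs to exactly one of $M_{\prev(i,j)}$ and $E(P^*_{i,j})$, so returning $\tau\oplus\rho$ is correct provided both bits are computed correctly. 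The bit $\tau$ is correct by the inductive hypothesis, so the crux is establishing correctness of $\proca_{i,j}(e)$.

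The main step is therefore to verify $\proca_{i,j}$, which is the double-indexed analogue of $\proca_i$ from Algorithm~\ref{alg:Ai}. In the listing phase, $\proca_{i,j}$ enumerates all $(M_{\prev(i,j)},[1,k])$-augmenting paths of gain $g_j$ containing $e$; this is possible because such paths have at most $2k+1$ edges and thus lie within the BFS balls of radius $2k+1$ around the endpoints of $e$, and for every edge $e'$ in these balls the predicate $e'\in M_{\prev(i,j)}$ is available by invoking $\oracle_{\prev(i,j)}$ (which, by induction, answers correctly). The gain $g_M(p)$ and the $(M,[1,k])$-augmenting property can then be checked directly from the collected weights and membership bits. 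In the \mis-phase, the algorithm queries $\lmis(H_{i,j},p)$ for each enumerated path $p$; by Lemma~\ref{lemma:simul} this consistently identifies the maximal independent set $P^*_{i,j}$, and $e\in E(P^*_{i,j})$ iff some such path is selected. Thus $\rho$ correctly decides membership in $E(P^*_{i,j})$.

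The only subtle point, and the one I expect to require a careful sentence, is that the simulation of probes to $H_{i,j}$ (Algorithm~\ref{alg:prob Hi}, appropriately adapted) depends recursively on $\oracle_{\prev(i,j)}$ as well, but this dependence is strictly on an earlier index in the lexicographic ordering, so the induction is well-founded. Combined with the fact that $H_{i,j}$ is the subgraph of $G_k$ induced by $P_{i,j}$ (so that a single coloring of $G_k$ yields a valid coloring of every $H_{i,j}$ and hence a consistent \mis\ via $\lmis$), the inductive step goes through and the lemma follows.
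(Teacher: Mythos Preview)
Your proposal is correct and follows exactly the approach the paper indicates: the paper does not spell out a proof but simply states that the lemma ``can be proved by induction,'' and your induction on the lexicographic order of $(i,j)$, with the XOR decomposition $M_{i,j}=M_{\prev(i,j)}\oplus E(P^*_{i,j})$ and the correctness of $\proca_{i,j}$ via the listing and \mis\ phases, is precisely that argument. Two minor cosmetic points: the base case should reference Algorithm~\ref{alg:Oij} rather than Algorithm~\ref{alg:Oi}, and the BFS depth in $\proca_{i,j}$ is $2k$ (from each endpoint) rather than $2k+1$, but neither affects the substance of your argument.
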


\medskip
\noindent
The proof of the following theorem is based on the proof of Theorem
2.4.7 in~\cite{onakthesis}.
\begin{theorem}
  Algorithm~\ref{alg:global mwm'} computes a $(1-\eps)$-approximate maximum weighted matching.
\end{theorem}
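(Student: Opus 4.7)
The plan is to follow the analysis of Pettie and Sanders~\cite{DBLP:journals/ipl/PettieS04}, adapted to the discretized/greedy-by-gain setting described by Onak~\cite{onakthesis}. The argument splits into three ingredients.

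First, I would establish a \emph{local optimality lemma}: any matching $M$ that admits no $(M,[1,k])$-augmenting path of positive gain satisfies $w(M)\geq \tfrac{k}{k+1}\, w(M_w^*)$. The proof decomposes the symmetric difference $M\oplus M_w^*$ into alternating paths and cycles, then further partitions each alternating piece into subpaths having at most $k$ unmatched edges; the absence of a gainful short augmentation lets one charge $w(M_w^*\setminus M)$ against $w(M\setminus M_w^*)$ piece-by-piece. Taking $k=\lceil 1/\eps\rceil$ yields a $(1-\eps)$ factor from this part.

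Second, I would verify a \emph{greedy-by-gain equivalence}: because of the preprocessing, every edge weight is an integer power of $(1-\eps/2)$ in $[\wmin(\eps),1]$, so the gain of any $(M,[1,k])$-augmenting path lies in the finite set $\{g_1>\cdots>g_{T_k}\}$. Sweeping $j=1,\ldots,T_k$ and augmenting along $P^*_{i,j}$ (a maximal vertex-disjoint set of paths of gain exactly $g_j$) faithfully simulates the Pettie-Sanders greedy procedure that repeatedly picks a highest-gain augmenting path; the order in which equal-gain vertex-disjoint paths are taken is irrelevant, since their contributions to $w(M)$ add independently.

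Third, I would prove a \emph{convergence lemma}: a single outer sweep $i$ applied to $M$ produces $M'$ with
\[
  w(M_w^*) - w(M') \;\leq\; \Bigl(1-\tfrac{1}{k+1}\Bigr)\bigl(w(M_w^*) - w(M)\bigr),
\]
again via the short-path decomposition combined with the greedy equivalence above. Iterating $L=\Theta(k\log(1/\eps))$ such sweeps drives the deficit below $\eps\cdot w(M_w^*)$; combined with the $(1-\eps)$ loss from ignoring lightweight edges and from weight-rounding (rescaling $\eps$ by a constant), the output $M_{L,T_k}$ is a $(1-\eps)$-approximate \mwm.

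The main technical obstacle is the convergence lemma. One must argue that after exhausting all $T_k$ gain values in the sweep we have genuinely removed all currently gainful short augmentations, and that when we augment at gain $g_j$ any \emph{new} gainful short augmenting path created has gain $\leq g_j$ or is handled by later sweeps; this is where the discretization is essential, since it ensures the gain spectrum is finite and the potential $w(M_w^*)-w(M_{i,j})$ decreases by a definite multiplicative factor per full sweep rather than only in the limit.
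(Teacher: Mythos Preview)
The paper supplies no proof here; it simply points to Theorem~2.4.7 of Onak's thesis~\cite{onakthesis}, which is exactly the Pettie--Sanders contraction argument carried out in the discretized setting you outline. So your route matches the intended one.

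One genuine correction, though: your convergence lemma is false as stated. If $M$ already admits no gainful $(M,[1,k])$-augmentation then a full sweep leaves $M$ unchanged, yet by your first ingredient $w(M)$ is only guaranteed to be $\tfrac{k}{k+1}\,w(M_w^*)$, not $w(M_w^*)$; so the deficit $w(M_w^*)-w(M)$ need not shrink at all. The inequality that actually holds replaces $w(M_w^*)$ by $\tfrac{k}{k+1}\,w(M_w^*)$ on both sides---one proves geometric convergence toward the $\tfrac{k}{k+1}$ plateau, not toward the optimum. This is where your local-optimality lemma feeds into the convergence proof (rather than standing as a separate endpoint statement), and it is why both $k=\Theta(1/\eps)$ \emph{and} $L=\Theta(k\log(1/\eps))$ are needed. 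Relatedly, the mechanism is not that a sweep eliminates every currently gainful short augmentation, nor any monotonicity of gains along the sweep; it is a charging comparison between the paths the sweep selects and a fixed maximum-gain vertex-disjoint family of $[1,k]$-augmentations of the pre-sweep matching, each sweep path being charged by $O(k)$ paths of no larger gain. Your ``faithfully simulates'' claim is likewise stronger than what is true or needed, but since the charging bound applies to the sweep directly, that part is harmless.
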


\section{A  \dlocall\ Approximate \mwmm\ Algorithm}
In this section, we present a \dlocal-algorithm that
computes a $(1-\eps)$-approximate weighted matching. The
algorithm is based on the same design methodology as in
Section~\ref{sec:dlocal mcm}. Namely, we bound the probe
radius of the \clocal-algorithm for \mwm\ (see
Lemma~\ref{lemma:radius weight}) and apply the simulation
technique (see Section~\ref{sec:simu}).

\begin{theorem}\label{thm:distalg mwm}
  There is a deterministic
$\dlocal[r ]$
-algorithm for computing a
  $(1-\eps)$-approximate \mwm\ with
\[r=(\log^*(n) +\Delta^{O(1/\eps)}) \cdot
    \left(\log \left(\frac{1}{\wmin(\eps)}\right)\right)^{O(1/\eps)}.\]
\end{theorem}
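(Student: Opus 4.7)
The plan is to mirror the two-step strategy used for Theorem~\ref{thm:distalg}: first exhibit a deterministic stateless \clocal-algorithm for $(1-\eps)$-\mwm\ (already described in Section~\ref{sec:clocal mwm} via the oracle $\oracle_{L,T_k}$), then bound its probe radius and invoke Proposition~\ref{prop:simul} to obtain a $\dlocal[r]$ simulation.

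The main work is therefore the probe-radius bound. I would set up recurrences in the same style as in Lemma~\ref{lemma:radius}, but adapted to the double indexing $(i,j)$ of $\oracle_{i,j}$ and $\proca_{i,j}$. Writing $r_G(\oracle_{i,j})$ for the probe radius of $\oracle_{i,j}$ and $\prev(i,j)$ for the predecessor of $(i,j)$ in the lexicographic ordering, one gets
\begin{align*}
    r_G(\oracle_{i,j}) &\leq \max\{r_G(\oracle_{\prev(i,j)}),\, r_G(\proca_{i,j})\},\\
    r_G(\proca_{i,j}) &\leq \max\{(2k{+}1)+r_G(\oracle_{\prev(i,j)}),\, 2k+r_G(\lmis(H_{i,j}))\},
\end{align*}
where $k=\lceil 1/\eps\rceil$ controls the length of the $(M,[1,k])$-augmenting paths. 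Exactly as in the unweighted case, the $\lmis$ radius decomposes as $r_{H_{i,j}}(\lmis(H_{i,j})) \leq r_{H_{i,j}}(C(H_{i,j}))+|C(H_{i,j})|$, and simulating one probe to $H_{i,j}$ via probes to $G$ inflates the radius by a factor of $2k+1$ (the path length). Using $\Delta(H_{i,j})\leq O(k^2 \Delta^{2k})$ and a deterministic $\poly(\Delta)$-coloring algorithm in $O(\log^*(n) + \poly(\Delta))$ rounds, each increment contributes at most $O(k\log^*(n)+\poly(\Delta^{2k})) = \log^*(n)\cdot\poly(k) + \Delta^{O(k)}$ to the probe radius.

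Unlike in the unweighted case, where we summed over $k=O(1/\eps)$ iterations, here we iterate over all pairs $(i,j)$ with $i\in[1,L]$ and $j\in[1,T_k]$. The algorithm of Pettie--Sanders as implemented in~\cite{onakthesis} uses $L=\poly(1/\eps)$ outer phases, and by the weight-discretization discussion in Section~\ref{sec:clocal mwm} the number of distinct gain values is $T_k=\Theta(W^{2k+1})$ with $W=\Theta(\eps^{-1}\log(1/\wmin(\eps)))$. Hence, telescoping the recurrence,
\begin{align*}
    r_G(\oracle_{L,T_k}) &\leq L\cdot T_k \cdot \bigl(\log^*(n)\cdot \poly(k)+\Delta^{O(k)}\bigr)\\
    &\leq \bigl(\log^*(n)+\Delta^{O(1/\eps)}\bigr)\cdot \Bigl(\log\bigl(1/\wmin(\eps)\bigr)\Bigr)^{O(1/\eps)},
\end{align*}
where the $\poly(1/\eps)$ and $L$ factors are absorbed into the exponent $O(1/\eps)$ inside the second bracket. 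Plugging this radius into Proposition~\ref{prop:simul} yields the claimed $\dlocal[r]$-algorithm.

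The step I expect to be most delicate is the bookkeeping of $L$ and $T_k$: one must verify that the maximal-set-of-augmenting-paths routine is indeed iterated only $L\cdot T_k$ times (so that the telescoped bound is tight), and that the dominant term in each increment is truly $\log^*(n)\cdot\poly(k)+\Delta^{O(k)}$ rather than anything depending on $n$ through $n(H_{i,j})\leq n^{2k+1}$, which matters because $\log^*(n^{2k+1}) = \log^*(n)+O(\log^*(k))$. Once these combinatorial bounds are in place, everything else is the same summation and simulation argument as in the unweighted case.
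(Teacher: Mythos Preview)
Your proposal is correct and follows essentially the same route as the paper: set up the doubly-indexed recurrence for $r_G(\oracle_{i,j})$, bound each increment by $O(k\log^*(n)+\poly(\Delta^{k}))$, telescope over the $L\cdot T_k$ pairs, and invoke Proposition~\ref{prop:simul}. The only additional remark in the paper is that the coloring underlying $\lmis$ can be computed once on the fixed intersection graph $G_k$ (over \emph{all} paths of length at most $2k{+}1$ in $G$) rather than on each $H_{i,j}$, since every $H_{i,j}$ is an induced subgraph of $G_k$; this is a conceptual convenience that yields the same asymptotic bound you derive.
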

\medskip\noindent
The proof of Theorem~\ref{thm:distalg mwm} is based on the following lemma.
Recall that ignoring lightweight edges implies that $\frac{1}{\wmin(\eps)} \leq
\frac{n}{\eps}$.

\begin{lemma}\label{lemma:radius weight}
The probe radius of the \clocal-algorithm $\oracle_{L,T_k}$ is
  \[
    r_G(\oracle_{L,T_k}) \leq
(\log^*(n) +\Delta^{O(1/\eps)}) \cdot
    \left(\log \left(\frac{1}{\wmin(\eps)}\right)\right)^{O(1/\eps)}
  \]
\end{lemma}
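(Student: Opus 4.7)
\medskip
\noindent
\textbf{Proof proposal.} My plan is to mirror the inductive proof of Lemma~\ref{lemma:radius}, but with the recursion now ranging over the doubly-indexed oracles $\oracle_{i,j}$. First I would write down the analogous recurrences. Since $\oracle_{i,j}$ calls $\oracle_{\prev(i,j)}$ together with $\proca_{i,j}$, and since $\proca_{i,j}$ lists all $(M_{\prev(i,j)},[1,k])$-augmenting paths through the queried edge and then invokes $\lmis$ on $H_{i,j}$, I expect
\[
r_G(\oracle_{i,j}) \leq \max\{r_G(\oracle_{\prev(i,j)}),\ r_G(\proca_{i,j})\},
\qquad
r_G(\proca_{i,j}) \leq (2k+1) + \max\{r_G(\oracle_{\prev(i,j)}),\ r_G(\lmis(H_{i,j}))\},
\]
exactly as in the unweighted case, except that the path length $2k+1$ no longer grows with the iteration index (this is the whole point of the Pettie--Sanders framework).

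Next I would bound $r_G(\lmis(H_{i,j}))$. Using Proposition~\ref{prop:color2ori} and Lemma~\ref{lemma:simul}, $r_{H_{i,j}}(\lmis(H_{i,j})) \leq r_{H_{i,j}}(C(H_{i,j})) + |C(H_{i,j})|$. Plugging in any of the standard \dcolor-style coloring algorithms yields $O(\log^*(n(H_{i,j})) + \poly(\Delta(H_{i,j})))$ on a graph with $n(H_{i,j}) \leq n^{2k+1}$ and $\Delta(H_{i,j}) \leq (2k+1)^2 \cdot \Delta^{2k}$. The simulation of each probe to $H_{i,j}$ by probes to $G$ inflates the radius by a factor of $2k+1$ (as in Algorithm~\ref{alg:prob Hi}), so
\[
r_G(\lmis(H_{i,j})) \leq (2k+1) \cdot O\bigl(\log^*(n) + \poly(\Delta^{2k+1})\bigr) \;=\; O\bigl(k\log^*(n)\bigr) + \Delta^{O(k)}.
\]
Hence each single step $(i,j)\mapsto\prev(i,j)$ contributes an additive $O(k\log^*(n)) + \Delta^{O(k)}$ to the probe radius.

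Unwinding the recurrence along the lexicographic order I get
\[
r_G(\oracle_{L,T_k}) \;\leq\; L \cdot T_k \cdot \Bigl(O(k\log^*(n)) + \Delta^{O(k)}\Bigr).
\]
It remains to collect the parameters: $k = O(1/\eps)$, $L = \poly(1/\eps)$ from the Pettie--Sanders scheduling, and (crucially) the discretization step in the preamble gives $W = \Theta((1/\eps)\log(1/\wmin(\eps)))$ distinct weights, so $T_k = \Theta(W^{2k+1}) = (\log(1/\wmin(\eps)))^{O(1/\eps)} \cdot (1/\eps)^{O(1/\eps)}$. Multiplying and absorbing the $(1/\eps)^{O(1/\eps)}$ factor into the $\Delta^{O(1/\eps)}$ summand yields the claimed bound $(\log^*(n)+\Delta^{O(1/\eps)}) \cdot (\log(1/\wmin(\eps)))^{O(1/\eps)}$.

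The main obstacle I anticipate is not the recurrence itself but bookkeeping the constants in $T_k$ and $L$ so that the two factors $(\log^*(n)+\Delta^{O(1/\eps)})$ and $(\log(1/\wmin(\eps)))^{O(1/\eps)}$ come out cleanly separated; in particular one must check that $L$ is bounded independently of $\wmin$ (which follows from the Pettie--Sanders analysis cited through~\cite{onakthesis}) and that the rounding of weights to powers of $(1-\eps/2)$ really does keep the number of distinct gain values, and hence of inner phases $T_k$, bounded by $W^{O(k)}$.
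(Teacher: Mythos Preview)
Your plan is essentially the paper's own proof: set up the same recurrences for $r_G(\oracle_{i,j})$ and $r_G(\proca_{i,j})$, bound $r_{H_{i,j}}(\lmis)$ via a $\poly(\Delta)$-coloring, translate to $G$ via the $(2k{+}1)$ stretch, and unroll over $L\cdot T_k$ steps with the parameter count $k=O(1/\eps)$, $L=\poly(1/\eps)$, $T_k=W^{O(k)}$.

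One bookkeeping point deserves tightening. Your displayed bound
\[
r_G(\lmis(H_{i,j})) \leq (2k+1)\cdot O\bigl(\log^*(n)+\poly(\Delta^{2k+1})\bigr)
\]
omits the additive cost of simulating an $H_{i,j}$-probe in $G$, which is not just the stretch factor but also $r_G(\textit{probe}(i,j,p)) = O(k)+r_G(\oracle_{\prev(i,j)})$ (the probe routine must call the previous oracle on every edge in the BFS ball). Without this additive $r_G(\oracle_{\prev(i,j)})$ term your recurrences would collapse to a $\max$, not a sum, and you would not pick up the $L\cdot T_k$ factor you then use. The paper makes this explicit by writing $r_G(\lmis(H_{i,j}))\le (2k{+}1)\,r_{H_{i,j}}(\lmis(H_{i,j}))+r_G(\textit{probe}(i,j,p))$; once you add that term, your ``each step contributes an additive $O(k\log^*n)+\Delta^{O(k)}$'' claim is justified exactly as in the paper.

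A minor stylistic difference: the paper colors the fixed supergraph $G_k$ (all length-$\le 2k{+}1$ paths) rather than each $H_{i,j}$; since $H_{i,j}\subseteq G_k$ the same degree and vertex-count bounds apply, so your variant is fine. Finally, your last absorption step (``$(1/\eps)^{O(1/\eps)}$ into the $\Delta^{O(1/\eps)}$ summand'') is not literally valid for small constant $\Delta$; the paper is equally casual here, ending with $(\log^*n+\poly(\Delta^{1/\eps}))\cdot\poly(1/\eps)\cdot\poly(W^{1/\eps})$ and declaring ``the lemma follows,'' so you are matching, not falling short of, the paper on this point.
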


\begin{proof}
  The description of the oracle $\oracle_{i,j}$ implies that the probe radius
  $r_G(\oracle_{i,j})$ satisfies the following recurrence:
  \begin{align*}
    r_G(\oracle_{i,j})&=
    \begin{cases}
      0 & \text{if $(i,j)=(1,0)$},\\
      \max\{r_G(\oracle_{\prev(i,j)}), r_G(\proca_{i,j})\}& \text{else}.
\end{cases}
\end{align*}
The description of the procedure $\proca_{i,j}$ implies
that the probe radius $r_G(\proca_{i,j})$ satisfies the
following recurrence:
  \begin{align*}
    r_{G}(\proca_{i,j})&\leq O(k) + \max\{
r_G(\oracle_{\prev(i,j)}) , r_G(\lmis(H_{i,j}))\}.
\end{align*}
The simulation of probes to $H_{i,j}$ implies an increase
in the radius by a factor of $2k+1$ in addition to the
radius of the probes. Hence,
\begin{align*}
  r_{G}(\lmis(H_{i,j})) &\leq (2k+1) \cdot   r_{H_{i,j}}(\lmis(H_{i,j})) \\
                        & ~~~~~~~~~~ + r_G(\textit{probe}(i,j,p)).
\end{align*}
The orientation of $H_{i,j}$ can be based on a coloring of the intersection graph
$G_k$. Hence, by Lemma~\ref{lemma:simul}
\begin{align*}
  r_{H_{i,j}}(\lmis(H_{i,j})) &\leq r_{G_k}(C(G_k)) + |C(G_k)|.
\end{align*}

By employing a distributed vertex coloring algorithm with $O(\log^*(n) + \poly(\Delta))$
rounds that uses $\poly(\Delta)$ colors, we obtain
\begin{align*}
  r_G(\oracle_{i,j}) &\leq O(k) + r_G(\lmis(H_{i,j}))\\
&\leq O(k)\cdot r_{H_{i,j}}(\lmis(H_{i,j})) + r_G(\textit{probe}(i,j,p))\\
&\leq r_G(\oracle_{\prev(i,j)}) \\
& ~~~~~~~~~~ + O\Big(k\cdot \log^* (n(G_k)) + \poly(\Delta(G_k))\Big),
\end{align*}
Since $\Delta(G_k) \leq (2k+1)^2 \Delta^{2k+1}$ and $n(G_k)
\leq n^{2k+1}$ we get that
  \begin{align*}
    & r_G(\oracle_{L,T_k})  \leq L\cdot T_k \cdot  O\left(k\cdot \log^*(n) +\poly\left(\Delta^{k}\right)\right) \\
    & ~ = O\left(\frac {1}{\eps} \cdot \log \left(\frac{1}{\eps}\right) \cdot W^{O\left(\frac{1}{\eps}\right)} \cdot
      \left(\frac{1}{\eps}\cdot \log^*(n) +\poly\left(\Delta^{\frac{1}{\eps}}\right)\right)\right)\\
    & ~ =\left(\log^*(n) +\poly\left(\Delta^{\frac{1}{\eps}}\right)\right) \cdot \poly\left(\frac{1}{\eps}\right) \cdot
    \poly\left(W^{\frac{1}{\eps}}\right),
  \end{align*}
and the lemma follows.
\end{proof}

\section{Future Work} In the full version we present an
improved algorithm for the $(1-\eps)$-approximate \mwm.
This improved algorithm computes a $(1-\eps)$-approximate
\mwm\ within $$O\left(\frac{1}{\eps^2} \cdot \log
\frac{1}{\eps} \right)\cdot \log^*n +
    \Delta^{O(1/\eps)}\cdot \log \left(\frac{1}{\wmin(\eps)}
         \right)$$
rounds.

\ifnum \icdcn=0
\bibliographystyle{alpha}
\else
\bibliographystyle{abbrv}
\fi
\bibliography{local-short}

\appendix
\section{Pseudo-Code for \clocall $(1-\eps)$-approximate
\mwmm-Algorithm}\label{sec:pseudo code}

\begin{algorithm}[H]\scriptsize
\caption{$\text{Global-APX-MWM}(G,\eps)$ -Onak's adaption~\cite{onakthesis} of the
  global algorithm of Pettie and Sanders~\cite{DBLP:journals/ipl/PettieS04}.} \label{alg:global
mwm}
\begin{algorithmic}[1]
\Require A graph $G=(V,E)$ with edge weights $w(e)\in
(0,1]$ that are integer powers of $1-\frac{\eps}{3}$ for
$0<\eps <1$. \Ensure  A $(1-\eps)$-approximate weighted
matching \State $k \gets\lceil \frac{3}{\eps}\rceil$.
\State $L \gets \Theta\left(\frac{1}{\eps} \cdot \log\left(
\frac{1}{\eps}\right)\right)$. \State $T_k = \Theta
\left(\frac{1}{\eps}\cdot \log
\left(\frac{1}{\wmin(\eps)}\right)\right)^{2k+1}$. \Comment
$T_k$ is an upper bound on the number of distinct gains of
augmenting paths of length at most $2k+1$. \State $M\gets
\emptyset$. \For {$i=1$ to $L$} \For {$j=1$ to $T_k$}
\State $P^*_{i,j}$ is a maximal set of vertex disjoint
$(M,[1,k])$-augmenting paths with gain $g_j$.
  \State $M\gets M \oplus E(P^*_{i,j})$.
  \EndFor  \EndFor
\State \textbf{Return} $M$.
\end{algorithmic}
\end{algorithm}
\begin{algorithm}[H]\scriptsize
\caption{$\text{Global-APX-MWM'}(G,\eps)$ -Rewriting of
Algorithm~\ref{alg:global mwm}
  using the intersection graph $H_{i,j}$.} \label{alg:global
mwm'}
\begin{algorithmic}[1]
\Require A graph $G=(V,E)$ with edge weights $w(e)\in
(0,1]$ that are integer powers of $1-\frac{\eps}{3}$ for
$0<\eps <1$. \Ensure  A $(1-\eps)$-approximate weighted
matching \State $k,L,T_k$ as in Algorithm~\ref{alg:global
mwm} \State $M_{1,0}\gets \emptyset$. \For {$i=1$ to $L$}
\For {$j=1$ to $T_k$} \State $P_{i,j} \gets $ set of
$(M_{\prev(i,j)},[1,k])$-augmenting paths with gain $g_j$.
  \State Construct the intersection graph $H_{i,j}$ over $P_{i,j}$.
  \State $P^*_{i,j} \gets \mis (H_{i,j})$.
  \State $M_{i,j} \gets M_{\prev(i,j)} \oplus E(P^*_{i,j})$.
  \EndFor  \EndFor
\State \textbf{Return} $M_{L,T_k}$.
\end{algorithmic}
\end{algorithm}

\begin{algorithm}[H]\scriptsize
\caption{$\oracle_{i,j}(e)$ - a recursive oracle for membership in the
  approximate weighted matching.} \label{alg:Oij}
\begin{algorithmic}[1]
\Require A query $e \in E$.
\Ensure Is $e$ an edge in the weighted matching $M_{i,j}$?
  \State If $(i,j)=(1,0)$ then return false.
  \State $\tau\gets \oracle_{\prev(i,j)}(e)$.
  \State $\rho\gets \proca_{i,j}(e)$.
  \State \textbf{Return} $\tau \oplus \rho$.
\end{algorithmic}
\end{algorithm}
%
\begin{algorithm}[H]\scriptsize
\caption{$\proca_{i,j}(e=(u,v))$ - a procedure for checking
membership of an edge $e$ in one
  of the paths in $P^*_{i,j}$.} \label{alg:Aij}
\begin{algorithmic}[1]
\Require An edge $e \in E$.
\Ensure Does $e$ belong to a path $p\in P^*_{i,j}$?
\State \textbf{Listing step:} \Comment Compute all shortest $M_{i-1}$-augmenting
paths that contain $e$.
  \State \quad $B_u \gets BFS_G(u)$ with depth $2k$.
  \State \quad $B_v \gets BFS_G(v)$ with depth $2k$.
  \State \quad For every edge $e'$ in the subgraph of $G$ induced by $B_u
\cup B_v$: $\chi_{e'}\gets \oracle_{\prev(i,j)}(e')$.
  \State \quad $P_{i,j}(e)\gets$ all $(M_{\prev(i,j)},[1,k])$-augmenting paths that
  contain $e$ with gain $g_j$
\State \textbf{\mis-step:} \Comment Check if one of the augmenting paths is in $P^*_i$.
  \State \quad For every $p\in P_{i,j}(e)$: If $\lmis(H_{i,j},p)$ \textbf{Return} true.
  \State \quad \textbf{Return} false.
\end{algorithmic}
\end{algorithm}
%
\begin{algorithm}[H]\scriptsize
  \caption{$\textit{probe}(i,j,p)$ - simulation of a probe to the intersection graph
    $H_{i,j}$ via probes to $G$.} \label{alg:prob Hij}
\begin{algorithmic}[1]
\Require A path $p \in P_i$ and the ability to probe $G$.
\Ensure The set of $(M_{\prev(i,j)}, k^*)$-augmenting paths with gain $g_j$ that intersect $p$.
\State For every $v\in p$ do
  \State \quad $B_v \gets BFS_G(v)$ with depth $2k+1$.
  \State \quad For every edge $e'\in B_v$: $\chi_e\gets \oracle_{\prev(i,j)}(e)$.
  \Comment determines whether the path is alternating and whether the endpoints
  are free.
  \State \quad $P_{i,j}(v)\gets$ all $(M_{\prev(i,j)},[1,k])$-augmenting paths that
  contain $v$ with gain $g_j$.
\item \textbf{Return} $\bigcup_{v\in p} P_{i,j}(v)$. 
\end{algorithmic}
\end{algorithm}

\end{document}